\documentclass[aps,pra,twocolumn,floatfix,nofootinbib,superscriptaddress,graphics,longbibliography]{revtex4-2}
\usepackage[utf8]{inputenc}
\usepackage[T1]{fontenc}

\usepackage{bm,graphicx,mathrsfs,amsmath,amssymb,mathtools,makecell,bbm, amsthm,dsfont,color,times,txfonts,nicefrac,framed,float,enumitem,tikz,physics,wrapfig,amsfonts,tcolorbox}
\usepackage{physics}
\usepackage{nicematrix}
\usepackage[colorlinks=true,linkcolor=equationcolor,citecolor=teal]{hyperref}
\usepackage{xcolor}
\usepackage{tensor}

\hypersetup{urlcolor=teal}
\definecolor{equationcolor}{RGB}{222,94,100}
\definecolor{boxcolor}{RGB}{215,215,253}
\definecolor{block2c}{RGB}{215,185,212}
\definecolor{block3c}{RGB}{216,155,172}
\definecolor{block4c}{RGB}{216,125,132}

\definecolor{frenchbeige}{rgb}{0.65, 0.48, 0.36}

\makeatletter
\def\blfootnote{\gdef\@thefnmark{}\@footnotetext}
\makeatother

\usepackage[normalem]{ulem}

\newtheorem{thm}{Theorem}

\newtheorem{prop}[thm]{Proposition}
\newtheorem{cor}[thm]{Corollary}
\newtheorem{obs}[thm]{Observation}
\newtheorem{defn}{Definition}
\newtheorem{const}{Construction}
\newtheorem{note}{Note}

\graphicspath{{Figures/}}

\begin{document}

\title{Quantum pushforward designs}

\author{Jakub Czartowski}
	 \affiliation{Faculty of Physics, Astronomy and Applied Computer Science, Jagiellonian University, 30-348 Kraków, Poland}

 \author{Karol {\.Z}yczkowski}
	\affiliation{Faculty of Physics, Astronomy and Applied Computer Science, Jagiellonian University, 30-348 Kraków, Poland}
    
    \affiliation{Center for Theoretical Physics (CFT), Polish Academy of Sciences,  Al. Lotnik{\'o}w 32/46, 02-668 Warszawa, Poland}
 
\date{Feb. 24, 2025}

\begin{abstract}
    Designs, structures connected to averaging with respect to a given measure using finite sets of points, have proven themselves as invaluable tools across the field of quantum information, finding their uses in state and process tomography, key distribution and others. In this work, we introduce a new concept of pushforward designs, which allows us to obtain new structures from already existing ones by mapping them between the spaces, with specific examples including simplex designs and mixed state designs from complex projective designs. Based on the general concept, we put forward a structure called channel $[t,k]$-design, allowing for averaging over space of quantum channels for systems in contact with an environment of dimension $k$. Based on this notion, we introduce the concept of effective environment dimensionality $k^*$, which we estimate for the IBM Kyoto quantum computer to be below $2.2$ for times up to $350\mu\text{s}$. 
\end{abstract}

\maketitle

\section{Introduction} \label{sec:intro}

The problem of averaging or integrating functions with respect to a given measure using a finite set of points has been present in the mathematical community ever since Gauss introduced the idea of quadratures which can be used to approximate the integral of any given function up to a certain polynomial~\cite{Gauss1815}. This and similar methods have been ubiquitous in the context of numerical integration, which appears in many computational tasks across a plethora of different fields. Similar problems naturally extend to arbitrary measurable spaces, on which it makes sense to define functions, giving rise to the idea of designs \cite{levenshtein1992a}.

The concept of designs has gained novel meaning in the context of quantum information, where they have been adapted for complex projective and unitary spaces. The former have been used for state tomography~\cite{scott2006tight}, quantum key distribution~\cite{klappenecker2005mutually}, or sampling derandomisation~\cite{AE07}, among others. The latter have been shown to be useful for ancilla-assisted process tomography~\cite{scott2008optimizing} and identification of universal quantum gate sets~\cite{sawicki2022universality}. In turn, there has been significant interest in finding exemplars of such structures, including mutually unbiased bases and symmetric informational for complex projective designs and the Clifford group for unitary designs~\cite{durt2010MUBs, webb2016Clifford}.

In recent years there have been works that connect already known designs in spaces such as complex projective space with other spaces, such as space of mixed states~\cite{czartowski2020isoentangled}, probability simplex or projective torus~\cite{iosue2023projective}, by considering what may be termed as \textit{induced} or \textit{pushforward measures}. In face of such developments, this manuscript aims at unifying the aforementioned concepts under a common term of \textit{pushforward designs,} taking inspiration from standard nomenclature used in the context of manifold and differential geometry for objects obtained as images of a certain map from one space to another.

This work is organised as follows. In Section~\ref{sec:prelim} we introduce the basic concept of design and present two specific examples relevant to quantum information -- complex projective designs and unitary designs. Next, in Section~\ref{sec:pushforward} we introduce the concept of pushforward designs, which we exemplify in Section~\ref{sec:simplex} by simplex designs obtained from complex projective designs. In Section~\ref{sec:channel} we introduce the original concept of channel designs and demonstrate that they can be obtained as pushforward designs from unitary designs. Then, in Section~\ref{sec:dimension} we introduce the concept of effective environment dimensionality, which is estimated for \textit{IBM Kyoto}.

\section{Preliminaries} \label{sec:prelim}

\subsection{General concept of $t$-design}    

    \begin{figure}[h]
        \centering
        \includegraphics[width=\linewidth]{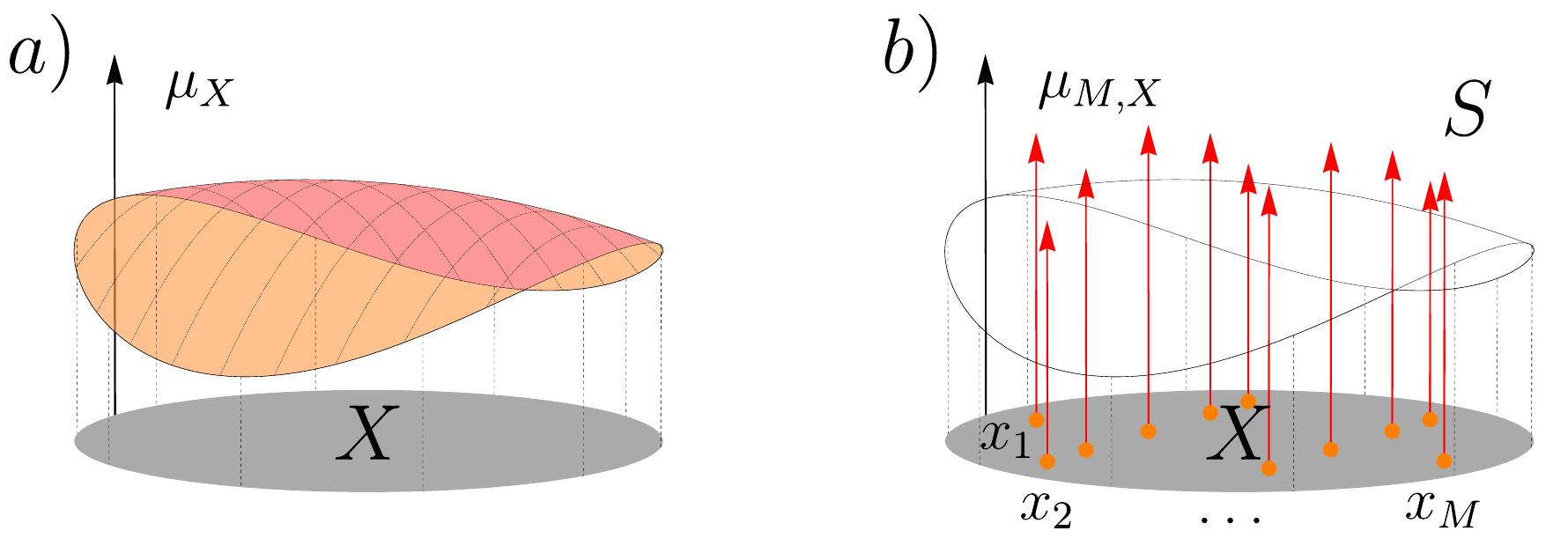}
        \caption{\textbf{Design $S$ as an approximation of a measure:} Consider a space $X\subseteq\mathbb{R}^d$ and a probability measure $\mu_X$ defined on this set [panel a)]. A $t$-design is a set of $M$ points $S = \qty{x_i}_{i=1}^M$, which define a discrete measure $\mu_{M,X}$ mimicking the continuous measure $\mu_X$, so that an average of a polynomial functions up to certain degree $t$ over $\mu_{M,X}$ is equal to that over $\mu_{X}$ [panel b)].}
        \label{fig:measure_design}
    \end{figure}

    We will begin this work by introducing a general idea of a $t$-design, or averaging set, which can be summarised qualitatively as answering the following question: \textit{"How to approximate a continuous distribution over a space of states using a finite set of points?"} -- a concept which is represented visually in Fig.~\ref{fig:measure_design}. More formally, we consider a set $S = \qty{s_i\in X}$ of $\abs{S} = M$ points from a given space $X\subseteq \mathbb{R}^d$ endowed with a nonsingular measure $\mu_{X}$. The set $S$ is called a $t$-design if for every polynomial $f_t(s)$ of degree $t$ the average taken over the set $S$ agrees with an average taken over the measure $\mu_X$,
    \begin{equation}
        \frac{1}{\abs{S}} \sum_{s_i\in S} f_t(s_i) = \int_{X} f_t(s)\dd{\mu_X},
    \end{equation}
    where we assume the measure $\mu_X$ to be normalised, \mbox{$\int_X\dd{\mu_X} = 1$}. 
    This concept, also called averaging set \cite{SEYMOUR1984213}, applies directly to notions such as quadrature rules, simplex $t$-designs and spherical $t$-designs \cite{Gauss1815, Delsarte1977}. An even more general notion of designs in metric spaces has been studied extensively by Levehnstein in a series of papers \cite{levenshtein1992a, levenshtein1998a, levenshtein1998b}.

    The whole concept is understood most easily by saying that a {$t$-design} is an approximation of a continuous distribution using a finite number of points, reproducing it up to $t$-copies or degree-$t$ polynomials.
     The averaging sets and in particular,
    the designs on a sphere, can be considered as  
generalizations of the Simpson integration rule
    and find various applications in quantum theory   
    \cite{Iblisdir2006optimal, Cieslinski2024}.
    In order to get a better understanding of
    the general concept of $t$-designs
     we recall two crucial cases: complex projective designs and unitary designs,
     which will be further used
     in this work.

    \subsection{Complex projective designs}

        Prior to discussing designs used in the context of quantum information, we highlight the limitations of the original concept when extended to subspaces of $\mathbb{C}^d$. Without loss of generality, consider a situation where for all $x \in X \subseteq \mathbb{C}^d$ and $\vb{z} = \mqty(e^{i\phi} & 1 & \hdots & 1), \, \phi \in \mathbb{R}$, it holds that $\tilde{x} = z x \in X$ and $\mu_X(x) = \mu_X(\tilde{x})$. In this case, it follows that  
        \begin{equation}
         \int_X \qty(\prod_{k=1}^t z_{j_k}) \dd{\mu_X} = 0
        \end{equation}
        whenever any of the directions agrees with the first complex axis, $j_k = 1$ for some $k$. The average is correct but provides no additional information about the measure $\dd{\mu}_X$.  
        
        To ensure that $t$-designs in complex spaces more accurately reflect the properties of such spaces, one considers the agreement of averages for balanced polynomials $f_{t,t}(\cdot)$ with equal degrees in both $z_i$ and its conjugate $z^*_i$,  
        \begin{equation}
         \frac{1}{\abs{S}} \sum_{s_i \in S} f_{t,t}(s_i) = \int_{X} f_{t,t}(s) \, \dd{\mu_X}.  
        \end{equation}
        Returning to the example, it can be seen that even the average of a balanced $(1,1)$ monomial, such as $\abs{z_1}$, vanishes identically.

        The above definitions can be easily extended to accommodate \textit{weighted $t$-designs}, represented by sets of pairs $(s_i,w_i\in\mathbb{R})$ or even to continuous measures as opposed to discrete sets.

        With this explanation in hand, we may move on to one of the original examples of designs, the complex-projective $t$-designs.
        Studied extensively by Hoggar and Bannai in a series of works~\cite{Hoggar1982, Hoggar1984, BannaiHoggar1985, Hoggar1989, BannaiHoggar1989, Hoggar1992}, complex projective designs have been later recognised for their utility in contexts such as quantum state tomography and others~\cite{AE07, scott2006tight}.

        The definition of a complex projective $t$-design closely follows the general definitions. 

        \begin{defn}[Complex projective design]
            Consider a set of states $S = \qty{\ket{\psi_i}\in\mathcal{H}_d}$. We say that the set $S$ is a \emph{complex projective $t$-design} if the average over the set for any balanced polynomial $f_{t,t}$ in the components of the state and their conjugates is equal to the average over the whole space $\mathcal{H}_d$ with respect to the measure $\dd{\mu}_{\text{Haar}}$ induced from the flat Haar measure over the unitary group $\mathcal{U}(d)$,
            \begin{equation}
                \frac{1}{\abs{S}}\sum_{\ket{\psi_i}\in S} f_{t,t}\qty(\ket{\psi_i}) = \int_{\mathcal{H}_d} f_{t,t}\qty(\ket{\psi}) \dd{\mu_{\text{Haar}}}.
            \end{equation}
        \end{defn}

        It is interesting to note that there exists an efficient numerical method for verification and generation of complex projective $t$-designs. Given an arbitrary set of normalised states $S = \qty{\ket{\psi_i}\in\mathcal{H}_d}$, the entries in the Gram matrix satisfy the so-called Welch bound,
        \begin{equation}
            \sum_{\ket{\psi_i}\in S}
            \sum_{\ket*{\psi_j}\in S} \abs{\ip{\psi_i}{\psi_j}}^{2t} \geq \frac{\abs{S}^2}{\binom{d+t-1}{t}}.
        \end{equation}
        Note that $\binom{d+t-1}{t}$ is the dimension of the $t$-copy symmetric subspace of $\mathcal{H}_d^{\otimes t}$. For derivation and geometric interpretation of the Welch bound, we refer the reader to~\cite{Datta2012WelchGeo}. 

        Complex projective $2$-designs find one of their uses in state tomography. They have been proven to be optimal for linear non-adaptive tomography, with a particularly simple reconstruction formula~\cite{scott2006tight},
        \begin{equation} \label{eq:proj_des_reco_form}
            \rho = \frac{d(d+1)}{\abs{S}} \sum_{i=1}^{\abs{S}} p_i \op{\psi_i} - \mathbb{I}.
        \end{equation}

        The simplest examples of complex projective $2$-designs include two well-known structures: symmetric informationally complete (SIC) POVMs and mutually unbiased bases (MUB). The former are complex relatives of regular simplices in the real space $\mathbb{R}^d$, and are given by a set of $d^2$ states $\ket{\psi_i}$ of constant overlap,
        \begin{equation}
            \abs{\ip{\psi_i}{\psi_j}}^2 = \frac{1}{d+1}.
        \end{equation}
        Such structures have been conjectured to exist for any dimension $d$ \cite{Zauner2011qdesign}, with numerous analytical and numerical examples found over the years, with the record dimension of \mbox{$d=39604$}~\cite{bengtsson2024sicpovms}. Almost all SIC-POVMs so far are covariant with respect to the Weyl-Heisenberg (WH) group, and thus generated by the action of the WH group on a single fiducial vector, with the only outlier in $d=8$ given by the Hoggar lines, generated by a specific complex reflection group \cite{Hoggar1981}.
        
        A pair of MU bases $\ket{e_i},\,\ket*{f_j}$ in dimension $d$ are defined as bases that are "as different as possible", with constant overlap between states from different bases,
        \begin{equation}
            \abs{\ip{e_i}{f_j}}^2 = \frac{1}{d}
        \end{equation}
        for all $i,j\in\qty{1,\hdots,d}$.
        It has been proven that there exist at most $d+1$ such bases in any given dimension $d$, with construction available for prime-power dimensions $d=p^n$ and the smallest composite dimension $d=6$ remaining as an open problem~\cite{durt2010MUBs}.

    \subsection{Interlude: multi-copy approach}

    Before moving on to the unitary designs, it is interesting to note that for $X\subseteq\mathbb{R}^d$ defining $t$-designs in terms of polynomials of order $t$ is equivalent to a $t$-copy approach, where the averages are taken over $t$-copies of the original space, $X^{\otimes t}$,

        \begin{equation}
        \frac{1}{\abs{S}}\sum_{s_i\in S} s_i^{\otimes t} = \int_{X} s^{\otimes t} \dd{\mu_{X,X^{\otimes t}}}. 
    \end{equation}
    It is important to note that the induced measure $\mu_{X,X^{\otimes t}}$ on $X^{\otimes t}$ is defined as $\mu_{X,X^{\otimes t}}(x^{\otimes t}) = \mu_X(x)$ for all $x\in X$ and $\mu_{X,X^{\otimes t}}(p) = 0$ otherwise. Simply put, the measure is restricted to the symmetrc subspace of $X^{\otimes t}$. Later on, for brevity, we will write $\mu_X$ instead of $\mu_{X,X^{\otimes t}}$ whenever the $t$-copy setting is clear from the context.
    
    Equivalence of the $t$-copy and a complete set of monomials which provide a basis for polynomials $f_t$ is straightforward -- the object $s^{\otimes t}$ contains every possible product of $t$ coordinates in $\mathbb{R}^d$, while the symmetric measure $\mu_X$ provides the required agreement between copies.

    The adaptation to complex setting, $X\subseteq\mathbb{C}^d$, is achieved by considering average agreement for copies of objects together with their conjugates,
    \begin{equation}
        \frac{1}{\abs{S}}\sum_{s_i\in S} \qty(s_i\otimes s_i^*)^{\otimes t} = \int_S \qty(s\otimes s^*)^{\otimes t} \dd{\mu_{X}}.        
    \end{equation}
    Equivalence to the formulation in terms of balanced polynomials follows the same reasoning as for the real case. It is interesting to note that the average is technically evaluated over $t$ copies of the space of linear operators $\mathcal{L}(S)$ with the entire measure restricted to (non-normalised) rank-1 projectors onto~$s\in X$.

    \subsection{Unitary designs}

        Since the unitary group $\mathcal{U}(d)$ is the set of natural operations governing the evolution of pure states of $d$-dimensional quantum systems, it is natural to consider structures allowing effective approximation of this group. This, again, can be handled by introducing the concept of \textit{ unitary designs} \cite{AE07, dankert2009exactunitary}.
        \begin{defn}[Unitary $t$-design]
            Consider a set \mbox{$S = \qty{U_i\in\mathcal{U}(d)}$} of elements of the unitary group $\mathcal{U}(d)$ and a balanced polynomial function $f_{t,t}(U)$ in the components of $U$ and its conjugate $U^*$. The set $S$ is called a unitary $t$ design if the average of $f_{t,t}$ over the set is equal to the average over the entire group $\mathcal{U}_d$ with the natural Haar measure,
            \begin{equation}
                \frac{1}{\abs{S}}\sum_{U_i\in S} f_{t,t}(U) = \int_{\mathcal{U}(d)} f_{t,t}(U)\dd{U}.
            \end{equation}
        \end{defn}

        This is the natural definition, and here it is beneficial to express it in the $t$-copy form, directly linked to the adjoint action,
            \begin{equation}
                \frac{1}{\abs{S}}\sum_{U_i\in S} \qty(U_i\otimes U_i^\dagger)^{\otimes t} = \int_{\mathcal{U}(d)} \qty(U\otimes U^\dagger)^{\otimes t}\dd{U}.
            \end{equation}
        In this formulation, it is clear that the problem is equivalent to finding the $I$ matrix known from the Weingarten calculus, and solutions can be found analytically for $1\leq t\leq d$ and have to be calculated numerically for $t > d$~\cite{Collins_2022}.

        A canonical example of a unitary $3$-design is the Clifford group \cite{webb2016Clifford}, significant in quantum information due to the efficient classical simulation of Clifford circuits on non-magic states \cite{gottesman1998heisenberg, aaronson2004improved}. In order to define it, let us first consider the Pauli group on $n$ qubits, given by operators of the form
        \begin{equation}
            \mathcal{P}_n \equiv \qty{P =i^{l} \bigotimes_{a=1}^n\sigma_x^{j_a}\sigma_z^{k_a}},
        \end{equation}
        with $l\in\qty{0,1,2,3}$, $j_a, k_a\in\qty{0,1}$. It can be generated by taking Pauli gates $\sigma_x$ and $\sigma_z$ acting locally on each qubit.  The number of elements in the group is easy to calculate, equal to $\abs{\mathcal{P}_n} = 4^{n+1}$. With this in hand, we may define the Clifford group on $n$ qubits as the group of unitary operations that map the Pauli group onto itself under adjoint action of the unitary group,
        \begin{equation}
            \mathcal{C}_n = \qty{C: C\in SU(2^n),\,\forall_{P\in\mathcal{P}_n} CPC^\dagger\in\mathcal{P}_n}.
        \end{equation}
        Clifford group is generated by local Hadamard operations,
        \begin{equation}
            H = \frac{1}{\sqrt{2}}\mqty(1&1\\1&-1),
        \end{equation}
        local phase-shifts $S = \operatorname{diag}(1,I)$ and all pairwise $\operatorname{CNOT}_{i\rightarrow j}$ operators. The order of the Clifford group grows much more quickly than that of the underlying Pauli group~\cite{Grier2022classificationof},
        \begin{equation}
            \abs{\mathcal{C}_n} = 2^{n^2+2n}\prod_{j=1}^n\qty(4^j-1).
        \end{equation}
        In the table below, we provide a comparison of cardinalities of Pauli and Clifford groups for small~$n$.
        \begin{table}[H]
            \centering
            \begin{tabular}{|c|c|c|c|c|c|}
            \hline
                $n$ & 1 & 2 & 3 & 4 & 5 \\ \hline
                $\abs{\mathcal{P}_n}$ & 16 & 64 & 256 & 1024 & 4096 \\
                $\abs{\mathcal{C}_n}$ & 24 & 11 520 & 92 897 280 & 12 128 668 876 800 & $\sim2.54\cdot10^{19}$ \\
            \hline
            \end{tabular}
            \caption{Orders of the Pauli group $\mathcal{P}_n$ and Clifford group $\mathcal{C}_n$ for small numbers of qubit $n$. Notice rapid growth of the Clifford group, compared with the Pauli group~\cite{Grier2022classificationof}.}
            \label{tab:my_label}
        \end{table}

        It is relatively simple to show the following relation between unitary and complex-projective designs.

        \begin{obs}\label{obs:proj_from_unit}
            Consider a unitary $t$-design $S = \qty{U_i\in\mathcal{U}(d)}$ 
            and an arbitrary pure state $\ket{\psi}\in\mathcal{H}_d$. Then, the set $S\ket{\psi} = \qty{U_i\ket{\psi}}$ is a complex projective $t$-design. 
        \end{obs}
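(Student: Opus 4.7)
The plan is to work entirely in the $t$-copy formulation introduced in the Interlude, since it makes the relationship between the two types of design essentially a linear-algebraic identity rather than a combinatorial argument about balanced polynomials. The starting point is the factorisation
\begin{equation}
    (U_i\ket{\psi})^{\otimes t}\otimes\qty((U_i\ket{\psi})^*)^{\otimes t} = \qty(U_i\otimes U_i^*)^{\otimes t}\,\ket{\psi}^{\otimes t}\otimes \qty(\ket{\psi}^*)^{\otimes t},
\end{equation}
which simply exploits the fact that tensor products and complex conjugation commute with matrix multiplication.

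Averaging this identity over the set $S$ and pulling the $\ket{\psi}$-dependent vector out of the sum on the right, the unitary $t$-design property of $S$ replaces $\tfrac{1}{|S|}\sum_i (U_i\otimes U_i^*)^{\otimes t}$ by $\int_{\mathcal{U}(d)}(U\otimes U^*)^{\otimes t}\dd{U}$. Reversing the factorisation inside the integral then yields
\begin{equation}
    \frac{1}{|S|}\sum_i(U_i\ket{\psi})^{\otimes t}\otimes\qty((U_i\ket{\psi})^*)^{\otimes t} = \int_{\mathcal{U}(d)} (U\ket{\psi})^{\otimes t}\otimes\qty((U\ket{\psi})^*)^{\otimes t}\dd{U}.
\end{equation}

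To conclude, I need to identify the right-hand side with the Haar average over pure states defining a complex projective $t$-design. This is precisely the defining property of the induced (Fubini–Study / unitarily invariant) measure $\mu_{\text{Haar}}$ on $\mathcal{H}_d$: it is the pushforward of the Haar measure on $\mathcal{U}(d)$ under the map $U\mapsto U\ket{\psi}$, with the value of the reference state $\ket{\psi}$ immaterial because of left-invariance of $\dd{U}$. Thus the $t$-copy condition for a complex projective $t$-design is met, and equivalence with the balanced-polynomial formulation (already discussed in the excerpt) finishes the argument.

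The main subtlety I foresee is not the algebra, which is short, but rather being explicit about the measure-theoretic step: one must argue that the definition of $\mu_{\text{Haar}}$ on $\mathcal{H}_d$ is genuinely independent of the chosen reference state, so that the observation produces a bona fide complex projective $t$-design rather than merely a design with respect to some $\ket{\psi}$-dependent measure. This is essentially a one-line invocation of unitary invariance, but it is the only place where anything beyond formal manipulation is used.
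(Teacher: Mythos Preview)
Your proof is correct, but it is \emph{not} the route the paper takes. The paper's justification is a single sentence: one verifies that the set $\{U_i\ket{\psi}\}$ saturates the Welch bound $\sum_{i,j}|\langle\psi_i|\psi_j\rangle|^{2t}\ge |S|^2\binom{d+t-1}{t}^{-1}$, using the unitary $t$-design property to evaluate the double sum over $|\langle\psi|U_i^\dagger U_j|\psi\rangle|^{2t}$. Your route via the $t$-copy formulation is more direct and conceptually cleaner---it shows the statement is essentially a tautology once one accepts the multi-copy characterisation, and it is exactly the ``linear pushforward'' mechanism the paper later formalises in Proposition~\ref{prop:linear_proj}. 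The Welch-bound route, on the other hand, has the practical payoff of immediately suggesting the Corollary that follows (a numerical saturation criterion for unitary $t$-designs), which your argument does not deliver for free.

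One minor notational slip worth cleaning up: your factorisation mixes tensor orderings. The operator $(U_i\otimes U_i^*)^{\otimes t}$ acts naturally on $(\ket{\psi}\otimes\ket{\psi}^*)^{\otimes t}$, not on $\ket{\psi}^{\otimes t}\otimes(\ket{\psi}^*)^{\otimes t}$ as written; the two differ only by a fixed permutation of factors, so the argument is unaffected, but as stated the first displayed identity is not literally true.
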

        The proof proceeds straightforwardly by demonstrating the saturation of the Welch bound for the given $t$. Alternatively, the following bound for unitary $t$-designs of a specified $t$, which is equivalent to the Welch bound 
        may be checked. 
        \begin{cor}
            Consider a set of unitary matrices $S = \qty{U_i\in\mathcal{U}(d)}$. The following bound
            \begin{widetext}
            \begin{equation}
                \sum_{a=1}^d\sum_{U_i,\,U_j\in S}\abs{\ev{U_i^\dagger U_j}{a}}^{2t} \!\!\!\!= 
                \sum_{a=1}^d\sum_{U_i,\,U_j\in S}\abs{\sum_{b=1}^d \overline{(U_i)_{ab}}(U_j)_{ba}}^{2t} \!\!\!\!\geq d \abs{S}^2 \binom{d+t-1}{t}^{-1}
            \end{equation}
            \end{widetext}
            is satisfied by any set $S$ and is saturated if and only if $S$ is a unitary $t$-design. 
        \end{cor}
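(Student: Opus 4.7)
The plan is to interpret the left-hand side, basis vector by basis vector, as a Welch sum for the complex projective set $\qty{U_i\ket{a}}_i$, and then invoke Observation~\ref{obs:proj_from_unit} to bridge back to the unitary $t$-design property. For each fixed basis vector $\ket{a}$, setting $\ket{\psi_i^{(a)}} := U_i\ket{a}$, the identity $\ev{U_i^\dagger U_j}{a} = \ip{\psi_i^{(a)}}{\psi_j^{(a)}}$ rewrites the inner double sum as the ordinary Welch sum for $\abs{S}$ unit vectors. The Welch bound recalled in Section~\ref{sec:prelim} then yields, for every $a$,
\[
\sum_{i,j} \abs{\ip{\psi_i^{(a)}}{\psi_j^{(a)}}}^{2t} \geq \frac{\abs{S}^2}{\binom{d+t-1}{t}},
\]
with equality exactly when $\qty{U_i\ket{a}}$ is itself a complex projective $t$-design. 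Summing over $a=1,\dots,d$ delivers the stated lower bound for arbitrary~$S$.

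For the saturation equivalence, since each of the $d$ per-$a$ Welch terms is individually bounded below by the same quantity, the total attains $d\abs{S}^2/\binom{d+t-1}{t}$ iff every summand saturates, i.e., iff $\qty{U_i\ket{a}}$ is a complex projective $t$-design for each computational basis state $\ket{a}$. The ``if'' direction of the corollary is then immediate from Observation~\ref{obs:proj_from_unit}: a unitary $t$-design $S$ produces a projective $t$-design $\qty{U_i\ket{\psi}}$ for every $\ket{\psi}$, in particular for each $\ket{a}$, so every per-$a$ Welch bound saturates and the total sum meets its lower bound.

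The main obstacle is the converse direction. In the $t$-copy formulation, simultaneous saturation for each $a$ gives
\[
\frac{1}{\abs{S}}\sum_i \qty(U_i \op{a} U_i^\dagger)^{\otimes t} = \frac{\Pi_{\mathrm{sym}}}{\binom{d+t-1}{t}},
\]
which coincides with the Haar twirl of $\op{a}^{\otimes t}$. From there I would promote the finite twirl $T(X) = \frac{1}{\abs{S}}\sum_i U_i^{\otimes t} X (U_i^\dagger)^{\otimes t}$ to agree with the Haar twirl $\mathcal{T}$ on all of $\mathcal{L}(\mathcal{H}_d^{\otimes t})$: first by linearity on the span of $\qty{\op{a}^{\otimes t}}_a$, then by exploiting the joint $S_t$- and $\mathcal{U}(d)$-covariance shared by $T$ and $\mathcal{T}$ in order to enlarge the agreement to the full symmetric subspace, and finally to arbitrary operators by restricting to the range of $\mathcal{T}$. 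The delicate point is that the $d$ diagonal seeds $\op{a}^{\otimes t}$ span only a $d$-dimensional subspace, while the symmetric subspace has dimension $\binom{d+t-1}{t}$; closing this gap using only the structural symmetries of the twirl, without independently assuming Welch-bound saturation on other bases, is where the real work lies.
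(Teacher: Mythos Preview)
The paper does not actually supply a proof of this corollary; it is stated immediately after Observation~\ref{obs:proj_from_unit} as an equivalent formulation. Your argument for the inequality and for the ``if'' direction (unitary $t$-design $\Rightarrow$ saturation) is the natural one and matches the paper's implicit reasoning via per-$a$ Welch bounds together with Observation~\ref{obs:proj_from_unit}.

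The gap you identify in the converse, however, is not merely delicate---it is fatal, because the ``only if'' direction as written is false. Take $d=2$, $t=1$, and $S=\{\mathbb{I},\sigma_x\}$. For each $a\in\{0,1\}$ the set $\{U_i\ket{a}\}=\{\ket{0},\ket{1}\}$ is an orthonormal basis and hence a projective $1$-design, so each per-$a$ Welch sum equals $\abs{S}^2/d=2$ and the total equals $4=d\abs{S}^2/\binom{d}{1}$, saturating the bound. Yet $S$ is not a unitary $1$-design: $\tfrac12(\mathbb{I}\,\sigma_x\,\mathbb{I}+\sigma_x\sigma_x\sigma_x)=\sigma_x\neq 0=\Tr(\sigma_x)\,\mathbb{I}/2$. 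The obstruction is exactly the one you flagged: knowing that the finite twirl agrees with the Haar twirl on the $d$ diagonal seeds $\op{a}^{\otimes t}$ does not determine it on the full operator space, and no symmetry argument internal to the computational basis can close that gap. The standard Welch-type characterization of unitary $t$-designs uses the frame potential $\sum_{i,j}\abs{\Tr(U_i^\dagger U_j)}^{2t}$ rather than the sum of powers of individual diagonal entries; it is that quantity whose saturation is equivalent to the unitary $t$-design property.
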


        This, in fact, provides an amenable way to generate unitary designs numerically and to check whether a given set of unitary operations indeed provides a unitary $t$-design.

        In fact, induction of complex projective designs by acting with unitary design in a given fixed state provides an example of the central notion of this work: \textit{pushforward design}.

\section{Pushforward designs} \label{sec:pushforward}

    \begin{figure}[h]
        \centering
        \includegraphics[width=\linewidth]{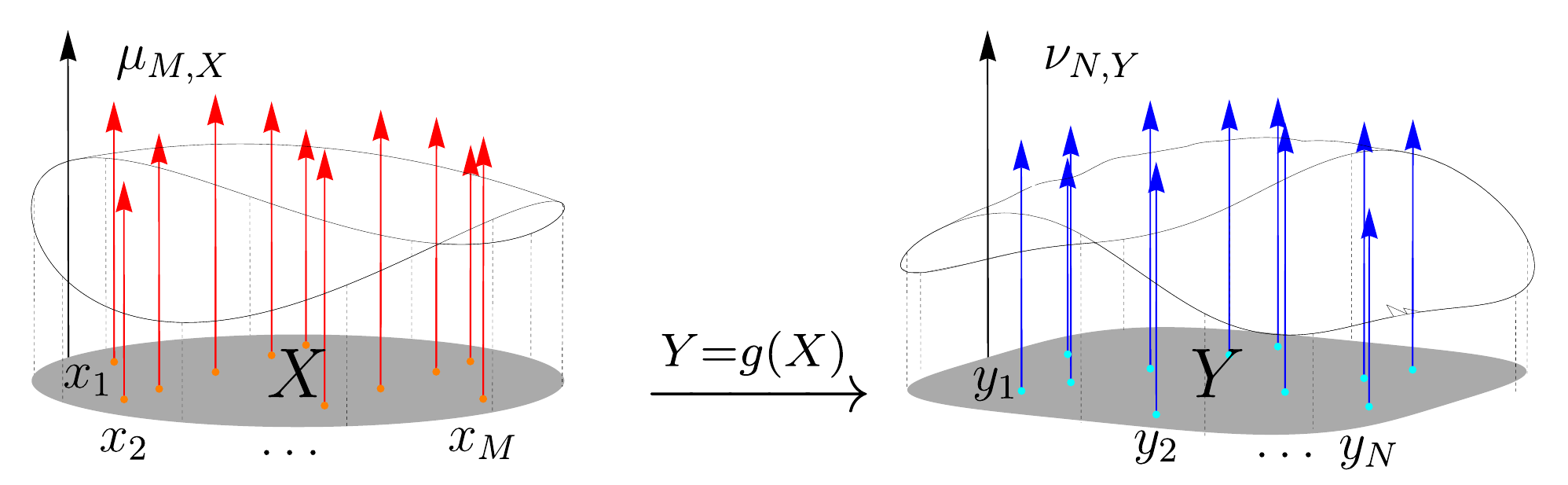}
        \caption{\textbf{Pushforward design}: Let us take $t$-design from one state space $X$ to another state space $Y$ via a mapping function $g: X \to Y$. On the left, the state space $X$ with points $\qty{x_i}_{i=1}^M$ and their associated measure $\mu_{M,X}$ (red arrows). The function $g$ maps these points to the state space $Y$ on the right, resulting in points $\qty{y_j}_{j=1}^N$ with the measure $\nu_{N,Y}$ (blue arrows). 
        }
        \label{fig:pushforward_design}
    \end{figure}

    In order to unify different concepts mentioned in the introduction, such as mixed states designs, complex projective designs obtained from unitary designs, or simplex designs obtained from complex projective designs using maps from one space to another, we introduce the concept of \textit{pushforward design.}

    \begin{defn}[Pushforward design]\label{def:pushforward_design}
        Consider a pair of state spaces $X$ and $Y$, a measurable map $g: X \to Y$, and a non-singular measure $\mu_{M,X}$ on $X$. The function $g$ therefore induces a measure $\nu_{N,Y}$ on $Y$, given by $\nu_{N,Y}(O_Y) = \mu_{M,X}\qty[g^{-1}(O_Y)]$ for any $O_Y\subset Y$. Let $\qty{x_i}_{i=1}^M$ be a $t$-design on $X$. We say that the set $\qty{y_j}_{j=1}^N = g(\qty{x_i})$, with associated weights $w_j$ corresponding to the multiplicities of each $y_j$, is an induced $t$-design if it provides a $t$-design with respect to the pushforward measure $\nu$ on $Y$.
    \end{defn}

    The concept of pushforward design has been depicted in an abstract manner in Fig.~\ref{fig:pushforward_design}. One may instantly formulate the following proposition:
    \begin{prop}[Linear induction of pushforward designs]\label{prop:linear_proj}
        The Definition~\ref{def:pushforward_design} is realised whenever $g(\cdot)$ is linear and non-degenerate. When $\operatorname{dim}(X) > \operatorname{dim}(Y)$, the degree $t'$ of the resulting design may be greater than the pre-image design, $t'\geq t$, and the number of points may be reduced at the cost of degeneracy, resulting in a weighted design. 
    \end{prop}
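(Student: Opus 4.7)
The plan is to lift the $t$-design identity on $X$ through $g$ using the $t$-copy reformulation from the Interlude: the condition on $X$ reads $\frac{1}{\abs{S}}\sum_{x_i\in S} x_i^{\otimes t} = \int_X x^{\otimes t}\dd{\mu_X}$. Since $g$ is linear, tensor powers factor as $g(x)^{\otimes t} = g^{\otimes t}(x^{\otimes t})$, so applying the single linear operator $g^{\otimes t}$ to both sides of the design identity on $X$ gives
\begin{equation*}
\frac{1}{\abs{S}}\sum_{x_i\in S} g(x_i)^{\otimes t} = \int_X g(x)^{\otimes t}\dd{\mu_X} = \int_Y y^{\otimes t}\dd{\nu_{N,Y}},
\end{equation*}
where the last equality is the change-of-variables formula for the pushforward measure $\nu_{N,Y} = g_{\ast}\mu_{M,X}$ of Definition~\ref{def:pushforward_design}. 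Collecting equal images $g(x_i) = g(x_{i'}) = y_j$ into single terms and absorbing the counting into weights $w_j$ then yields a (possibly weighted) $t$-design on $Y$. Non-degeneracy of $g$ enters to guarantee that $\nu_{N,Y}$ is itself non-singular, so the resulting $t$-copy identity on $Y$ is non-vacuous; linearity is essential, since without it $g(x)^{\otimes t}$ does not reduce to the action of a single linear operator on $x^{\otimes t}$.

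Second, for the claim that the degree $t'$ on $Y$ may exceed $t$ when $\dim X > \dim Y$: in this regime $g$ necessarily has a nontrivial kernel, so many $x_i$ collapse onto a single $y_j$, lowering the cardinality from $M$ to some $N \le M$ and forcing the multiplicity weights $w_j$. More importantly, the pullback $g^{\ast}$ sends degree-$t'$ polynomials on $Y$ into a strict subspace of degree-$t'$ polynomials on $X$, and if the image $g(X)\subset Y$ obeys algebraic relations (for example, a normalisation or the quadratic constraints cutting out the simplex or the set of mixed states), those relations allow certain degree-$t'$ monomials in $y$ to be rewritten as polynomials of degree at most $t$ in $x$. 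Whenever such a reduction covers every degree-$t'$ monomial, the $t$-design identity on $X$ already certifies a $t'$-design identity on $Y$ with $t' \ge t$.

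I expect the main obstacle to be precisely this last part: the inequality $t' \ge t$ is not automatic but contingent on the algebraic structure of $g(X)$, and a clean proof would need either a quantitative criterion controlling how much $t'$ can exceed $t$ in terms of the codimension and defining relations of $g(X) \subset Y$, or a case-by-case argument tailored to the specific examples (simplex, mixed-state, and channel designs) developed in the later sections. The equal-degree propagation and the weighting mechanism, by contrast, follow immediately from linearity of $g$ together with the change-of-variables formula, so I would present them as a short self-contained argument and flag the degree-increase claim as a conditional refinement to be substantiated by the concrete constructions that follow.
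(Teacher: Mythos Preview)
Your core argument is correct and essentially the same as the paper's: both hinge on the fact that linearity of $g$ makes $g^{\otimes t}$ a single linear operator on $X^{\otimes t}$, so applying it to the $t$-copy design identity immediately transports the identity to $Y$. Your change-of-variables formulation is in fact cleaner than the paper's phrasing in terms of bases of linear functions being mapped to complete bases or overcomplete frames on $Y$; the two are equivalent, but yours makes the role of the pushforward measure explicit in one line rather than leaving it implicit in a frame argument.

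Where you diverge is in the explanation for $t'\ge t$. You attribute a possible degree increase to algebraic relations cutting out $g(X)\subset Y$ (normalisation, quadratic constraints), which lets higher-degree monomials in $y$ be rewritten as lower-degree polynomials in $x$. The paper instead attributes it to \emph{symmetries of the pushforward measure}: its concrete example is the interval $[-1,1]$ with the even measure, where all odd-degree averages vanish identically, so any $2n$-design is automatically a $(2n+1)$-design. These are genuinely different mechanisms --- yours is geometric (constraints on the support), the paper's is measure-theoretic (invariances of $\nu$) --- and both can produce a degree jump, but the paper's mechanism is the one actually operative in the sphere-to-interval example of Fig.~\ref{fig:line_sphere_des}. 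Both you and the paper correctly flag that the increase is not automatic and depends on the specifics of $X$, $Y$, and the measures involved.
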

    \begin{proof}
    Linearity of the function $g$ ensures that the linear functions in $X$ are mapped to the linear functions in $Y$. Then, for equal dimensions $\operatorname{dim}(X) = \operatorname{dim}(Y)$, we find that a complete basis on $X$ is mapped to a complete basis on $Y$, and the agreement between averages with respect to the image measure is guaranteed by the linearity of $g$. Otherwise, when the dimension of the target space $Y$ is less than the dimension of the original space $X$, $\operatorname{dim}(X) > \operatorname{dim}(Y)$, the basis of linear functions on $X$ is mapped onto an over-complete frame on $Y$, thus preserving agreement of averages. The analogous argument applies when applied to $t$-copy averages, since $g^{\otimes t}:X^{\otimes t}\mapsto Y^{\otimes t}$ will remain linear, with the same arguments applied to the dimensionality of symmetric subspaces.

    The increase in degree $t'$ from the original degree $t$ can be attributed to symmetries of the image measure, e.g., all odd-degree polynomials over the $[-1,1]$ interval vanish, and thus any symmetric $2n$-design is automatically a $(2n+1)$-design and, in turn, may result from a properly orientated $2n$-design in the original space $X$. However, in general, such a possibility will depend on the specifics of $X$ and $Y$ spaces and the underlying measures.
    \end{proof}

    \smallskip

    Although general, Proposition~\ref{prop:linear_proj} can be applied to recover already known examples, with the elementary ones including complex projective designs obtained as pushforward designs from unitary designs (see Observation~\ref{obs:proj_from_unit}), mixed state designs obtained from complex projective or -- to go even simpler -- simplex designs in dimension $d$ induced by marginalisation of distribution from simplex designs of product dimension $d\cdot d'$. Before moving forward, however, let us remark that when $\operatorname{dim}(X) \leq \operatorname{dim}(Y)$, the image measure $\nu_Y$ will be restricted to a lower-dimensional hypersurface in $Y$ and zero otherwise. Thus the problem will be effectively turned back to the case,
    in which the dimensions of both spaces
    are equal.
    
    In the following section, we will consider two further constructions, starting with the already known construction of simplex designs by decoherence of complex projective designs~\cite{czartowski2020isoentangled}. Next, we will proceed with a novel idea of channel designs.

    \section{Simplex designs from projective designs} \label{sec:simplex}

    We will begin the exhibition of the concept of pushforward designs by revisiting a relation between the Hilbert space~$\mathcal{H}_d$ and the $d$-point probability simplex $\Delta_d\in\mathbb{R}^{d-1}$
    discussed in~\cite{CGGZ20}.

    A projection from a state $\ket{\psi}\in\mathcal{H}_d$ to the probability simplex is achieved by decoherence of the corresponding density operator with respect to a selected basis, $\ket{\psi}\overset{P}{\mapsto} p_i = \Tr(\op{i}\op{\psi}) = \abs{\ip{\psi}{i}}^2$, which is interpreted as the Born rule governing the probabilities of measuring an input labelled $i$ in a given state $\ket{\psi}$. It is well known that by taking the flat measure in $\mathcal{H}_d$ generated from the Haar measure in unitary operations $U(d)$ induces the flat Lebesgue measure on $d$-point probability simplex $\Delta_d$ in $\mathbb{R}^{d-1}$~\cite{czartowski2020isoentangled}.

    \begin{figure}[h]
        \centering
        \includegraphics[width =  \linewidth]{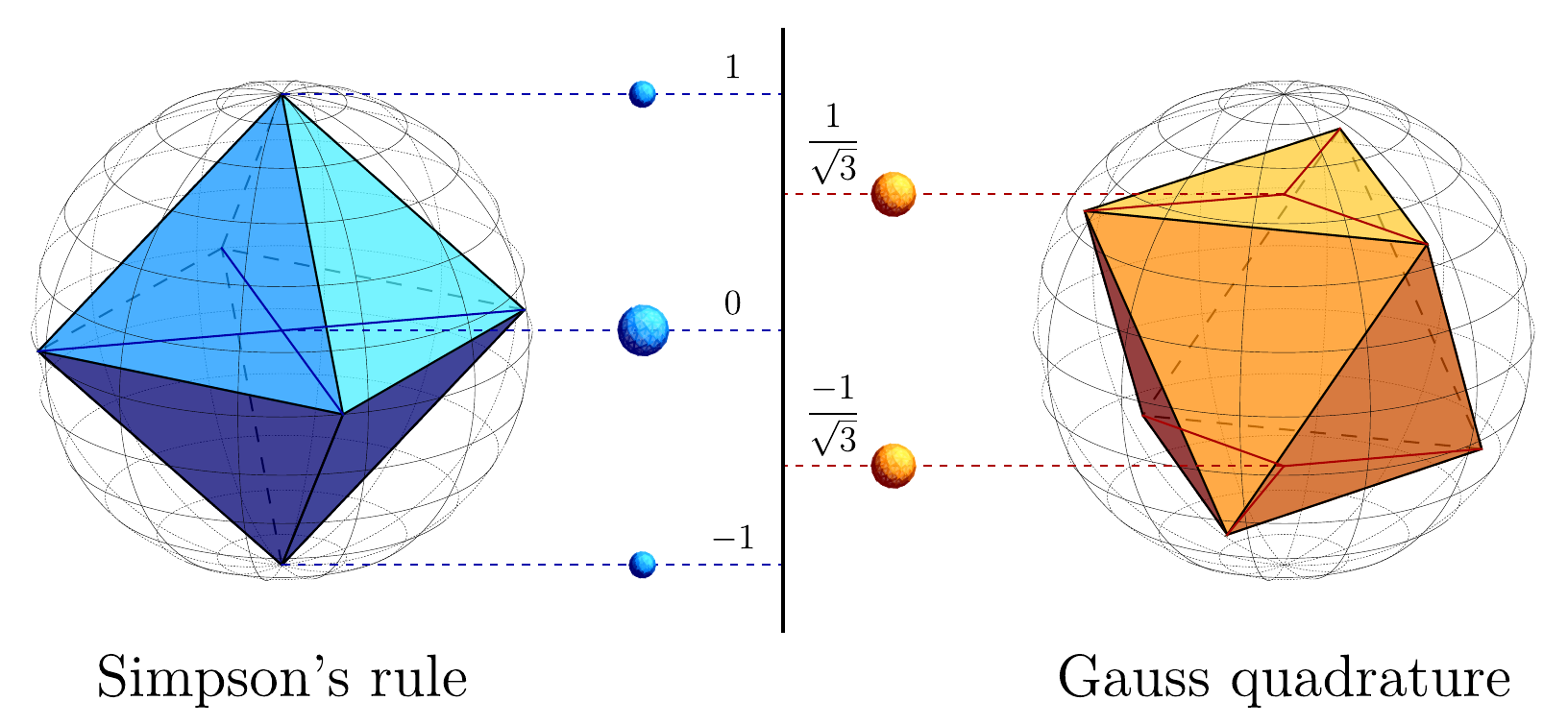}
        \caption{\textbf{Interval designs as projections of spherical designs:} 
        An octahedron defines a spherical 3-design,
        equivalently to $\mathbb{C}P^1$ $3$-design. 
        Depending on its orientation, by projecting it on a given axis one can generate either $1:4:1$ Simpson's rule, or two-point Gauss quadrature.
        }
        \label{fig:line_sphere_des}
    \end{figure}
    
    It may seem that the projection in question is not linear and as such does not fall within the scope of Proposition~\ref{prop:linear_proj}; however, it is worth noting that the map is linear in components of the state and their conjugates, and in turn, we can conclude that any $\mathbb{C}P^{d-1}$ design induces, by projection $P$, a design in~$\Delta_d$.
    
    This relation provides an accessible method for generating designs in the probability simplex $\Delta_d$. Let us consider the Hilbert space $\mathcal{H}_2$, corresponding to the Bloch sphere, and take an exemplary 2-design given by vertices of an octahedron. By considering its projections, as depicted in Fig.~\ref{fig:line_sphere_des}, one obtains interval designs corresponding to well-known integration rules over the $[-1,1]$ interval, the Gauss quadrature, and Simpson rule, related by their common origin from a single underlying complex projective design. Indeed, for any axis, corresponding to a unitary rotation $U$ of the entire complex projective $t$-design, one obtains a different $t$-design in the simplex. However, it is important to note that the minimal arrangements -- $2n$ uniformly weighed points that form the $2n+1$ design over the interval -- are unique.
    
    Dimension $d=3$ is interesting in its own rights, as there exists an entire one-parameter family of non-equivalent SIC-POVMs generated by the action of the Weyl-Heisenberg group on fiducial vectors of the form
    \begin{equation} \label{eq:fid_sic_3d}
        \ket{\psi(\theta)} = \sin \theta\frac{1}{\sqrt{2}}\mqty(0 \\1 \\-1) + \cos\theta \frac{1}{\sqrt{6}}\mqty (2 \\ -1 \\ -1).
    \end{equation}
    Decoherence of these states with respect to the computational basis yields a simplex rescaled by a factor of $1/2$ and rotated by the angle $\theta$. Another example comes from the standard form of MUB, composed of the computational basis and three unitary matrices with entries of constant absolute value, also referred to as \textit{Hadamard matrices}. The configurations originating from both SIC-POVMs and MUB are depicted in~Fig.~\ref{fig:2d_simplex_des}. 

    \begin{figure}[h]
        \centering
        \includegraphics[width=.5\linewidth]{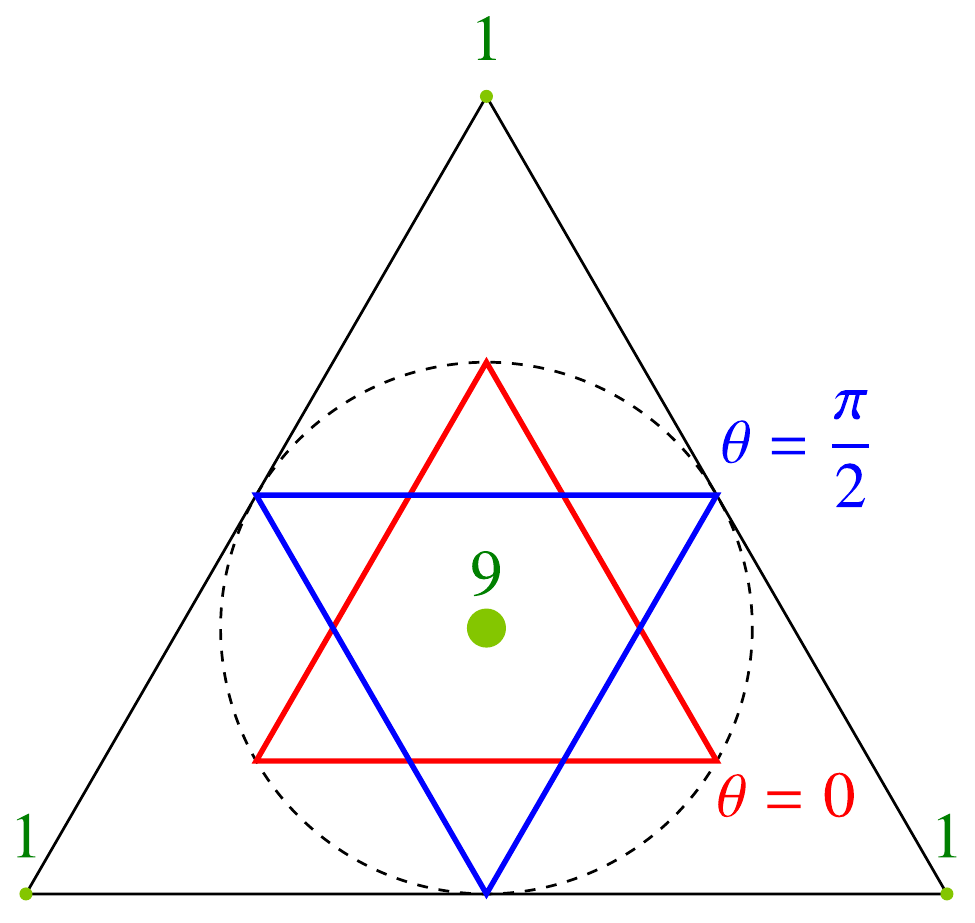}
        \caption{\textbf{2-designs in $\Delta_3$:} Depending on the choice of $\theta$ in the fiducial vector given in Eq.~\eqref{eq:fid_sic_3d} SIC-POVM yields a different simplex 2-design. Note that for every value of $\theta$ one finds an equilateral triangle inscribed in a circle of radius $r = 1/2$. A different configuration with ratios $1:9:1:1$, akin to Simpson rule, is generated by projecting standard set of MUB in $d=3$.}
        \label{fig:2d_simplex_des}
    \end{figure}

    Since dimension $d=4$ is the smallest product dimension, it is instructive to consider simplex 2-designs that can be generated by decoherence of the full set of MUB with respect to different distinguished bases. In standard form, MUB are composed of computational basis and $4$ other bases, which can be represented by Hadamard matrices, thus their decoherence generates points of weight $1$ for all vertices of the probability simplex and a central point of weight $16$. Another distinguished form of MUB in $d=4$ is the isoentangled configuration, in which all states possess the same degree of entanglement~\cite{czartowski2020isoentangled}. In this case, we find a highly nontrivial configuration of $9$ points within the simplex, with all but one of them of equal weight. Thus, one may ask about the existence of \textit{isocoherent MUB} yielding minimal number of points in $\Delta_4$, with all classical states equivalent up to permutation of entries. Existence of such a configuration is guaranteed by the possibility of generating all MUB as powers of a single unitary matrix $U$ -- it is then sufficient to shift to the eigenbasis of the said matrix \cite{seyfarth2011cyclicMUB}. In Appendix \ref{app:isocoh_MUB} we present explicit form of isocoherent MUB, found independently using simple numerical search. All three simplex designs resulting from above forms of MUB are presented in Fig.~\ref{fig:3d_simplex_designs}.   

    \begin{figure}[h]
        \centering
        \includegraphics[width=.9\linewidth]{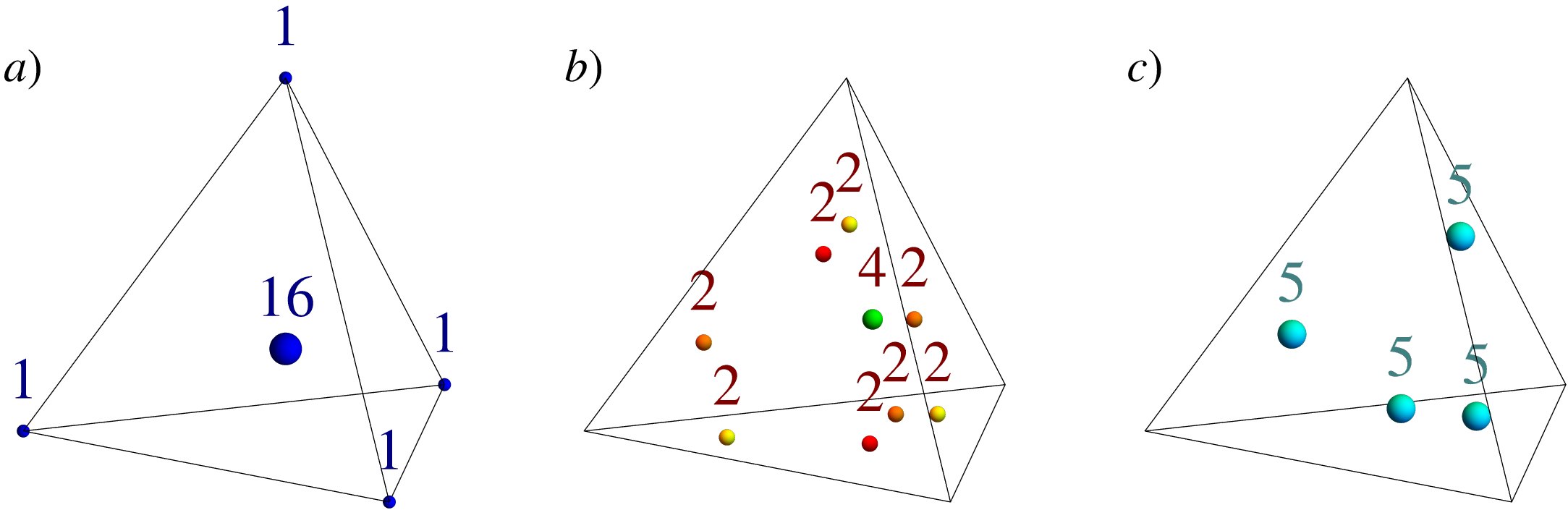}
        \caption{\textbf{2-designs in $\Delta_4$:} Three different configurations of points within the $\Delta_4$ simplex obtained by projecting MUB in $d=4$, with a) the standard form, yielding the multidimensional Simpson rule~\eqref{eq:simpson_rule}, b) irregular arrangement generated by isoentangled MUB and c) numerically found isocoherent MUB, given in Appendix \ref{app:isocoh_MUB}.}
        \label{fig:3d_simplex_designs}
    \end{figure}

    In higher dimensions the situation becomes more complicated, as cyclic permutation of amplitudes generated by the shift operator from the Weyl-Heisenberg group does not necessarily create regular simplices, which is already not the case for $d=4$. Despite this, since each SIC-POVM is a complex projective {2-design}, however seemingly asymmetric the configuration of resulting $d$ points in the simplex is, they are guaranteed to form simplex 2-designs by Proposition~\ref{prop:linear_proj}. Furthermore, based on decoherence of known projective designs, including MUB,  we may put forward the existence of a multidimensional Simpson rule.

    \begin{prop}[Generalized Simpson Rule]
        Let $\Delta_d$ be a $d$-point probability simplex in $R^{d-1}$
        and let $S$ be the set of $d+1$ weighted points in $\Delta_d$ consisting of $d$ points at the vertices with weight one and the central point with weight $d^2$, with overall ratio of
        \begin{equation}
            1:d^2:\underbrace{1:\hdots:1}_{d-1}.   \label{eq:simpson_rule} 
        \end{equation} 
        Then, the configuration given by $S$ forms a $2$-design with respect to the flat Lebesgue measure in $\Delta_d$.
    \end{prop}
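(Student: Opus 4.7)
The plan is to verify the claim by a direct moment comparison, which is self-contained and works in arbitrary dimension $d$, while also noting that for prime-power $d$ the configuration can be recognised as a pushforward of the MUB $2$-design.

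First, I would fix notation: let $\{e_k\}_{k=1}^d$ denote the vertices of $\Delta_d$ and $c = (1/d,\ldots,1/d)$ the centroid, with total weight $W = d\cdot 1 + d^2 = d(d+1)$. Since any polynomial of degree $\leq 2$ restricted to $\Delta_d$ is a linear combination of the constant $1$ and the monomials $p_i$ and $p_ip_j$, it suffices to match the averages of these monomials against the flat (Dirichlet$(1,\ldots,1)$) measure $\mu$, whose moments are $\int p_i\,d\mu = 1/d$, $\int p_i^2\,d\mu = 2/(d(d+1))$, and $\int p_ip_j\,d\mu = 1/(d(d+1))$ for $i\neq j$.

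Next, I would compute the weighted averages over $S$. The vertices contribute $\sum_k (e_k)_i = 1$, $\sum_k (e_k)_i^2 = 1$, and $\sum_k (e_k)_i(e_k)_j = 0$ for $i\neq j$, while the centroid with weight $d^2$ contributes $d^2\cdot(1/d) = d$ to $\sum p_i$, $d^2\cdot(1/d^2) = 1$ to $\sum p_i^2$, and $d^2\cdot(1/d^2)=1$ to $\sum p_ip_j$. Dividing by $W = d(d+1)$ yields $(1+d)/W = 1/d$, $(1+1)/W = 2/(d(d+1))$, and $(0+1)/W = 1/(d(d+1))$, which agree term by term with the Dirichlet moments. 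This establishes the $2$-design property.

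As an illuminating pushforward interpretation, when $d$ is a prime power the full set of $d+1$ MUBs is a complex projective $2$-design; decohering in the computational basis sends the computational MUB to the $d$ vertices $e_k$ (once each) and each of the remaining $d$ Hadamard-type bases to $d$ copies of the centroid (since all squared amplitudes equal $1/d$). Summing gives precisely the weight pattern $1:d^2:1:\cdots:1$, and Proposition~\ref{prop:linear_proj} then guarantees a simplex $2$-design. The main ``obstacle'' is nothing more than bookkeeping of the Dirichlet normalisation; the conceptual content is that the extremely symmetric point set $S$ -- vertices plus a heavily weighted centroid -- exactly reproduces the first and second Dirichlet moments, which is all that a simplex $2$-design requires.
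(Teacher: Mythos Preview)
Your proof is correct. The direct moment check against the Dirichlet$(1,\ldots,1)$ distribution is clean and the arithmetic is right; matching $1$, $p_i$, and $p_ip_j$ is sufficient because these span all degree-$\le 2$ polynomials restricted to $\Delta_d$.

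Your route differs from the paper's. The paper does not verify moments directly but instead proves the statement entirely through the pushforward mechanism: in prime-power dimensions it invokes the full MUB set (exactly as in your closing paragraph), and for the remaining dimensions it appeals to a known complex-projective $2$-design consisting of the computational basis together with $d^2$ additional unimodular vectors, whose decoherence again yields the $d$ vertices plus a centroid of weight $d^2$. Thus the paper's argument is uniform in $d$ only by importing an external construction, whereas your moment computation is self-contained and works for every $d$ without any quantum input. The trade-off is that the paper's proof showcases Proposition~\ref{prop:linear_proj} and the pushforward philosophy that motivates the whole section, while yours is more elementary and arguably more robust as a stand-alone verification. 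Your final paragraph already captures the prime-power pushforward picture, so the only thing you are missing relative to the paper is the remark that a suitable projective $2$-design exists in \emph{every} dimension, which would let you phrase the pushforward argument uniformly as well.
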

    \begin{proof}
        In every dimension $d$ where a complete set of $d+1$ mutually unbiased bases (MUB) exists, such a set forms a complex projective $2$-design~\cite{scott2006tight}. By applying a global rotation, it is always possible to align the first basis with the computational basis, while the remaining $d$ bases correspond to a complete set of mutually unbiased Hadamard matrices. Decoherence of the computational basis
        produces $d$ points in the corners of the 
        $d$-point simplex (of dimension $d-1$),
        while all states forming $d$ Hadamard matrices
        decohere to a point at the center of the simplex with weight $d^2$
        as described in the proposition.

        In dimensions where the problem of existence of complete sets of MUB remains unresolved, one may reason that, if such sets exist, they would similarly yield a simplex $2$-design. Alternatively, it is possible to consider a recently introduced configuration consisting of the computational basis and additional $d^2$ states, which do not form $d$ complete bases but nonetheless satisfy the properties of a complex projective $2$-design~\cite{iosue2023projective}.
        
        Thus, the set $S$ always satisfies the conditions for being a $2$-design with respect to the flat Lebesgue measure on $\Delta_d$.
    \end{proof}

    It is interesting to note that such configurations can be used to approximate averages of functions over arbitrary smooth hypersurfaces in a manner similar to that in which Gauss quadratures are used for numerical integration. To understand this, consider that any such hypersurface can be approximated using triangulation. Next, we note that affine transformations of a simplex applied to the uniform Lebesgue measure generate another uniform measure; thus, also the affine transformation of a simplex $t$-design remains a $t$-design for the new, distorted simplex. Thus, we may use it repeatedly, generating an averaging scheme as presented in Fig.~\ref{fig:mesh_averaging}

    \begin{figure}[h!]
        \centering
        \includegraphics[width=.8\linewidth]{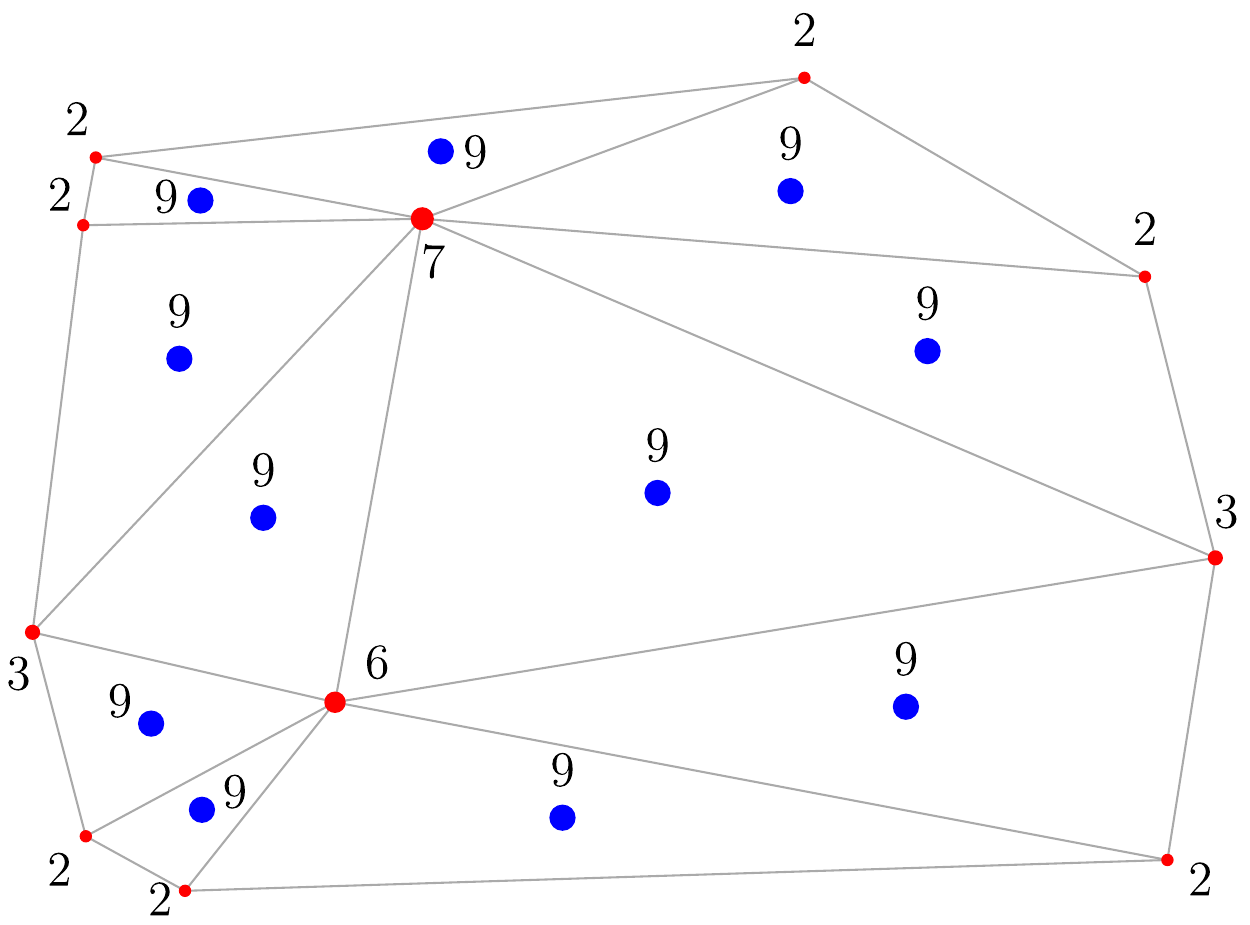}
        \caption{\textbf{2-design over a surface:} Example of a 2-design over the flat Lebesgue measure defined for a random polygon, based on its triangulation and the generalized  Simpson rule for $\Delta_3$ simplex, $1:9:1:1$. Note that the weights in mesh vertices (red) are equal to the number of simplices meeting at any given vertex, equal to the degree of a vertex in the bulk, while in the centroids (blue) we have constant weight of 9.}
        \label{fig:mesh_averaging}
    \end{figure}

    \break
    
    \section{Channel designs} \label{sec:channel}

    We will now consider pushforward designs induced by using partial trace as the pushforward map. In the case of the space of quantum states, the resulting pushforward designs, referred to as mixed state designs, have already been considered in~\cite{czartowski2020isoentangled}. Any pure-state design in a bipartite space $\mathcal{H}_{dk} = \mathcal{H}_d\otimes\mathcal{H}_k$ induces, therefore, a design in the space of mixed states.

    Let us now move on to the concept of channel designs as a new example of pushforward design. Consider a set of channels $S = \qty{\Phi_i}_{i=1}^m$ acting on a $d$-dimensional Hilbert space $\mathcal{H}_d$. We define a channel $[t,k]$-design in a natural way, based on the environmental form of a quantum channel, in relation to the average of the action of $t$ copies of a unitary operation $U$ over the set of unitary operations $\mathcal{U}(dk)$ of dimension $dk$ with a $k$-dimensional environment initialised in a pure state $\ket{0}_E\in\mathcal{H}_k$,
        \begin{align}
            \forall_{\rho\in\mathcal{B}(\mathcal{H}^{d^t})} \frac{1}{m} \sum_{i=1}^m \Phi^{\otimes t}_i(\rho) & = \int_{\mathcal{U}(dk)} \Tr_E\qty[U^{\otimes t}\qty(\rho\otimes\op{0}^{\otimes t}_{E})\qty(U^{\dagger})^{\otimes t}] \dd{U} \nonumber\\
            & \equiv \ev{\Tr_E\qty[U^{\otimes t}\qty(\rho\otimes\op{0}^{\otimes t}_{E})\qty(U^{\dagger})^{\otimes t}]}_{\mathcal{U}(dk)},     
        \end{align}
    where in the second line, for brevity, we introduce the notation for averaging over the unitary group $\mathcal{U}(dk)$. In other words, for any state $\rho$ of a $d^t$-dimensional system, the average effect of applying $t$-fold copies of channels from a channel $[t,k]$-design $S$ should be indistinguishable from the average effect of applying random unitaries of dimension $dk$ to the system, extended by a $k^t$-dimensional environment. A direct consequence of the above is that a unitary $t$-design in dimension $dk$ induces, by partial trace, a channel $[t,k]$-design in dimension $d$.  In particular, $[t,d^2]$-design represents channel design with respect to the flat (Lebesgue) measure    which will be also shortly denoted as channel $t$-design.
    
    One can also reformulate the statement in terms of corresponding Choi-Jamiołkowski states and the measure induced on the space of Choi-Jamiołkowski states, 
    \begin{widetext}
    \begin{equation}\label{eq:choi_average}
        \frac{1}{m} \sum_{i=1}^m \sigma_{\Phi_i^{\otimes t}} = 
        \ev{\Tr_E\qty[\qty(U_{\vb{AE}}^{\otimes t}\otimes\mathbb{I}_{\vb{B}}^{\otimes t})\qty(\op{\psi_+}_{\vb{AB}}^{\otimes t}\otimes\op{0}_{\vb{E}}^{\otimes t})\qty(\qty(U^{\dagger}_{\vb{AE}})^{\otimes t}\otimes\mathbb{I}_{\vb{B}}^{\otimes t})]}_{\mathcal{U}(dk)}
    \end{equation}
    \end{widetext}
    where the bold letters $\vb{A} \equiv A_1A_2\hdots A_t$ and similarly $\vb{B}$ and $\vb{E}$ correspond to $t$ copies of the principal systems $A,B$ and the environment $E$.

    Now, let us take a closer look at the RHS of \eqref{eq:choi_average} and first evaluate it explicitly for $t=1$. We may consider the block elements of the averaged Choi-Jamiołkowski state \mbox{$\ev{\sigma_{\Phi}}_{\mathcal{U}(dk)}\equiv \ev{\sigma_{\Phi}}$}, which are defined by
    \begin{equation}
        \ev{\sigma_\Phi}^a_b = \frac{1}{d}\Phi(\op{a}{b}) = \frac{1}{d}\sum_{l=0}^{k-1} K^{(l)}\op{a}{b}K^{(l)\dagger}.
    \end{equation}
    Shifting to the Stinespring dilation, we know that the elements of Kraus operators are defined explicitly as $(K^{(l)})^i_j = \tensor*{U}{*^l_0^i_j}$, which then allows us to consider the elements of the Choi-Jamiołkowski state as
    \begin{equation}
        \tensor*{\ev{\sigma_\Phi}}{*^a_b^i_j} = 
            \frac{1}{d}\ev{\sum_{l=0}^{k-1}
            \tensor*{U}{*^l_0^i_n}\delta^n_a\delta^b_m\tensor*{\qty(U^\dagger)}{*_l^0_j^m}}= 
            \frac{1}{d}\sum_{l=0}^{k-1}\ev{\tensor*{U}{*^l_0^i_a}\tensor*{\overline{U}}{*^l_0^j_b}}
    \end{equation}
    where we use the Einstein summation convention, with sum over $l$ index kept explicitly to highlight that it spans a different range of values; we also shortened $\ev{\cdot} \equiv\ev{\cdot}_{\mathcal{U}(dk)}$. In this case, due to symmetry, we may conclude without resorting to more sophisticated methods that
    \begin{equation}
        \tensor*{\ev{\sigma_\Phi}}{*^a_b^i_j} = 
            \frac{1}{d}\sum_{l=0}^{k-1} \frac{\delta^a_b\delta^i_j}{D} = \frac{\tensor*{\delta}{*^a_b^i_j}}{d^2}
    \end{equation}
    where for later convenience we introduced $D = dk$. This expression is in accordance with intuition that no specific direction should be distinguished in the space of states and thus, the averaged channel symmetric with respect to the adjoint action of the local unitary group $\mathcal{U}(d)$ should contract to the only point that does not change under this action, ie. the maximally mixed state. Thus, it cannot be anything other than the maximally depolarising channel. This points in fact to the following theorem.
    \begin{obs}
        channel $[1,k]$-design $\qty{\Phi_i}_{i=1}^m$ is also channel $[1,k']$-design for any $k'$.
    \end{obs}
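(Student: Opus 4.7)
The plan is to exploit the calculation immediately preceding the observation, which shows that the right-hand side of the defining equation for a channel $[1,k]$-design is independent of $k$. Since the left-hand side $\frac{1}{m}\sum_i \Phi_i(\rho)$ manifestly depends only on the channels $\{\Phi_i\}$ and not on any choice of ancillary dimension, it suffices to verify that the right-hand side evaluates to the same operator regardless of the value of $k$.

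Concretely, I would proceed as follows. First, recall that the defining equation for a channel $[1,k]$-design reads
\begin{equation}
\frac{1}{m}\sum_{i=1}^m \Phi_i(\rho) \;=\; \bigl\langle \operatorname{Tr}_E\!\bigl[U(\rho\otimes\op{0}_E)U^\dagger\bigr]\bigr\rangle_{\mathcal{U}(dk)}
\end{equation}
for every state $\rho$ on $\mathcal{H}_d$. The preceding block-element computation established that the averaged Choi--Jamio{\l}kowski state satisfies $\tensor*{\langle\sigma_\Phi\rangle}{*^a_b^i_j}=\tensor*{\delta}{*^a_b^i_j}/d^2$, which is equivalent to the statement that the averaged channel is the maximally depolarising channel $\rho\mapsto \mathbb{I}_d/d$. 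Crucially, this conclusion does not involve $k$ anywhere: it follows purely from the fact that the right-hand side is covariant under the adjoint action of the local unitary group $\mathcal{U}(d)$, so by Schur's lemma the only fixed point is the depolarising channel.

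Having this, the observation follows almost immediately. If $\{\Phi_i\}_{i=1}^m$ is a channel $[1,k]$-design, the defining equation reduces to $\frac{1}{m}\sum_i\Phi_i(\rho)=\mathbb{I}_d/d$ for every $\rho$. But exactly the same equation, with $k$ replaced by any $k'$, likewise reduces to the same condition on $\{\Phi_i\}$. Hence the two requirements are identical, and $\{\Phi_i\}$ is simultaneously a channel $[1,k']$-design for every $k'\ge 1$.

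There is no real obstacle here; the content of the observation is that for $t=1$ the parameter $k$ drops out of the pushforward construction. The only point worth highlighting cleanly in the writeup is the covariance argument (or equivalently the explicit Weingarten-type evaluation already given in the excerpt), since it is what guarantees that the right-hand side is the maximally depolarising channel for every $k$. Once that is stated, the equivalence of the defining conditions across different $k$ is tautological.
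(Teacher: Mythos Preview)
Your proposal is correct and follows essentially the same approach as the paper: the preceding computation shows that $\ev{\sigma_\Phi}_{\mathcal{U}(dk)} = \mathbb{I}_{d^2}/d^2$ independently of $k$, so the $[1,k]$-design condition reduces to $\frac{1}{m}\sum_i\Phi_i(\rho)=\mathbb{I}_d/d$ for every $\rho$, a condition that is manifestly the same for all $k'$. The paper likewise derives the observation directly from this $k$-independence of the averaged Choi--Jamio{\l}kowski state, invoking the same symmetry argument you highlight.
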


    Expressions for $t=2$ become more involved and read 
    \begin{widetext}
        \begin{align}
            \tensor*{\ev{\sigma_{\Phi^{\otimes 2}}}}{*^{a_1}_{b_1}^{a_2}_{b_2}^{i_1}_{j_1}^{i_2}_{j_2}} 
                = & 
                \frac{1}{d^2}\ev{\sum_{l_1,l_2=0}^{k-1}
                \tensor*{U}{*^{l_1}_0^{i_1}_{n_1}}
                \tensor*{U}{*^{l_2}_0^{i_2}_{n_2}}
                \delta^{n_1}_{a_1}
                \delta^{n_2}_{a_2}
                \delta^{b_1}_{m_1}
                \delta^{b_2}_{m_2}
                \tensor*{\qty(U^\dagger)}{*_{l_1}^0_{j_1}^{m_1}}
                \tensor*{\qty(U^\dagger)}{*_{l_2}^0_{j_2}^{m_2}}}
                = \frac{1}{d^2}\sum_{l_1,l_2=0}^{k-1}\ev{
                \tensor*{U}{*^{l_1}_0^{i_1}_{a_1}}
                \tensor*{U}{*^{l_2}_0^{i_2}_{a_2}}
                \tensor*{\overline{U}}{*^{l_1}_0^{j_1}_{b_1}}
                \tensor*{\overline{U}}{*^{l_2}_0^{j_2}_{b_2}}} \nonumber\\
                \overset{(a)}{=} & \frac{1}{d^2} \sum_{l_1,l_2=0}^{k-1}\left[ \frac{1}{D^2-1} \qty(
                \tensor*{\delta}{*^{a_1}_{b_1}^{a_2}_{b_2}^{i_1}_{j_1}^{i_2}_{j_2}} + 
                \tensor*{\delta}{*^{a_1}_{b_2}^{a_2}_{b_1}^{i_1}_{j_2}^{i_2}_{j_1}^{l_1}_{l_2}})
                - \frac{1}{D(D^2-1)}\qty(
                \tensor*{\delta}{*^{a_1}_{b_2}^{a_2}_{b_1}^{i_1}_{j_1}^{i_2}_{j_2}} + 
                \tensor*{\delta}{*^{a_1}_{b_1}^{a_2}_{b_2}^{i_1}_{j_2}^{i_2}_{j_1}^{l_1}_{l_2}})\right] \\
                = & \frac{1}{d^2(D^2-1)}\qty(
                k^2 \tensor*{\delta}{*^{a_1}_{b_1}^{a_2}_{b_2}^{i_1}_{j_1}^{i_2}_{j_2}} +
                k   \tensor*{\delta}{*^{a_1}_{b_2}^{a_2}_{b_1}^{i_1}_{j_2}^{i_2}_{j_1}} -
                \frac{k^2}{D} \tensor*{\delta}{*^{a_1}_{b_1}^{a_2}_{b_2}^{i_1}_{j_1}^{i_2}_{j_2}} - 
                \frac{k}{D}   \tensor*{\delta}{*^{a_1}_{b_2}^{a_2}_{b_1}^{i_1}_{j_2}^{i_2}_{j_1}}) \nonumber
        \end{align}
    \end{widetext}
    with the factors after $(a)$ related to the Weingarten functions $\operatorname{Wg}(\sigma,D)$, connected to averaging over unitary group by
    \begin{equation}
        \int_{\mathcal{U}(D)} \dd{U}\prod_{n=1}^t U^{i_n}_{j_n}\overline{U}^{i'_n}_{j'_n} = \sum_{\sigma,\tau\in\mathcal{S}_t} 
        \operatorname{Wg}\qty(\sigma\tau^{-1},D)\qty(\prod_{n=1}^t
        \delta^{i_n}_{i'_{\sigma(n)}}
        \delta^{j_n}_{j'_{\tau(n)}}).
    \end{equation}
    In particular, Weingarten functions depend only on the permutation cycle structure~\cite{collins2004integration, Collins_2022}.

    \begin{figure*}
        \centering
        \includegraphics[width=.75\linewidth]{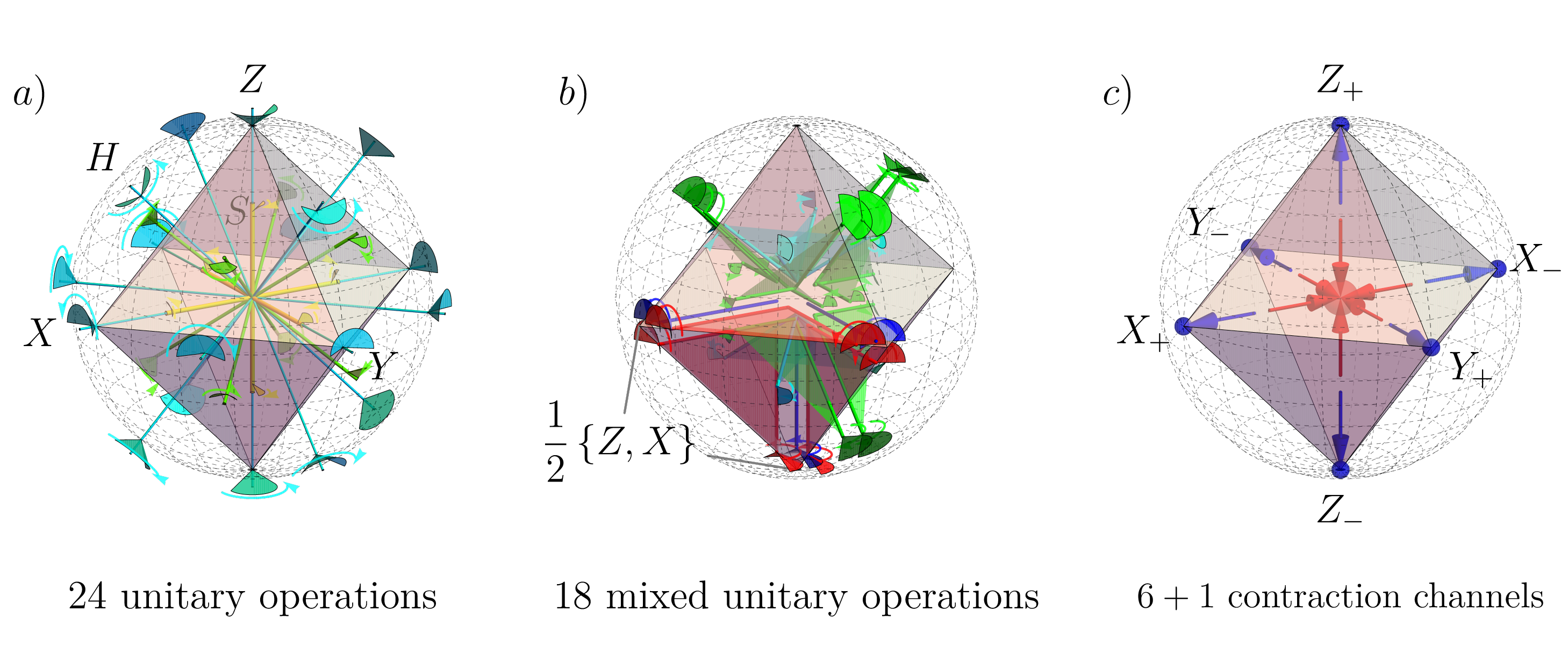}
        \caption{\textbf{Two-design in the set of single-qubit channels
        consists of 49 elements:} \textbf{a)} $24$ unitary operations comprising the Clifford group $\mathcal{C}_1$, each operation represented by the related axis and sector of a circle representing the angle of the rotation. Operations $\mathcal{R}_2$ of Kraus rank~2 include: \textbf{b)} 18 mixed unitary operations with equal-weight between different Clifford unitaries, each operation represented as a pair of semiaxes, with
        colours representing different mixtures; blue semiaxes represent mixing between half-rotations and identity transformation); \textbf{c)} 6 contraction  channels taking all the states into one of six vertices of the octahedron, each blue arrow represents an operation. The last 49-th operation -  the maximally depolarising channel has Kraus rank four and is represented by the circle at the center of the octahedron in panel c).
        }
        \label{fig:channel_design_comps}
    \end{figure*}

    By inspecting the above calculation we note that permutation between block indices $a_n,b_n$ is not equivalent to permutation for in-block indices $i_n,j_n$ -- a factor of $k^2$ appears when no permutation is effected, while only $k$ appears, due to evaluation of $\delta^{l_1}_{l_2}$, when there is a SWAP operation. Taking this into account, we give below the general formula
    \begin{equation} \label{eq:average_choi_explicit}
        \tensor*{\ev{\sigma_{\Phi^{\otimes t}}}}{*^{a_1}_{b_1}^{\hdots}_{\hdots}^{a_t}_{b_t}^{i_1}_{j_1}^{\hdots}_{\hdots}^{i_t}_{j_t}} = \sum_{\sigma,\tau\in\mathcal{S}_t} 
        \operatorname{Wg}\qty(\sigma\tau^{-1},D) k^{\operatorname{Cl}(\tau)}\qty(\prod_{n=1}^t
        \delta^{a_n}_{b_{\sigma(n)}}
        \delta^{i_n}_{j_{\tau(n)}})
    \end{equation}
    where $\operatorname{Cl}(\tau)$ is the total number of independent cycles in the permutation $\tau$.

    Using the above, one may formulate the following theorem.
    \begin{prop} \label{prop:analytical_criterion_channel_designs}
        Consider a set of channels $X = \qty{\Phi_i}$ acting on $d$-dimensional Hilbert space $\mathcal{H}^d$ and any $t\geq1$. For any set $X$ it holds that 
        \begin{equation}
            \norm{\frac{1}{\abs{X}}\sum_{i=1}^{\abs{X}} \sigma_{\Phi_i^{\otimes t}} - \ev{\sigma_{\Phi^{\otimes t}}}_{\mathcal{U}(dk)}} \geq 0
        \end{equation}
        with $\ev{\sigma_{\Phi^{\otimes t}}}_{\mathcal{U}(dk)}$ given explicitly in \eqref{eq:average_choi_explicit}. The equality is achieved if and only if the set $X$ is a channel $[t,k]$-design.
    \end{prop}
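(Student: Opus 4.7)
The inequality is immediate from positive definiteness of any norm on operators on $\mathcal{H}_d^{\otimes 2t}$, so the proposition reduces entirely to characterising the equality case. My strategy is to invoke non-degeneracy of the norm and then translate equality of Choi matrices back into equality of channels using the Choi--Jamio{\l}kowski isomorphism, at which point the condition becomes, by construction, the defining condition of a channel $[t,k]$-design.

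Concretely, $\|A\|=0$ forces $A=0$, so saturation of the bound is equivalent to
\begin{equation*}
\frac{1}{|X|}\sum_{i=1}^{|X|}\sigma_{\Phi_i^{\otimes t}} \;=\; \langle\sigma_{\Phi^{\otimes t}}\rangle_{\mathcal{U}(dk)}.
\end{equation*}
Since $\Psi \mapsto \sigma_\Psi$ is a linear bijection between linear maps on $\mathcal{B}(\mathcal{H}_d^{\otimes t})$ and operators on $\mathcal{H}_d^{\otimes 2t}$, and since $\sigma_{\Phi_i^{\otimes t}}$ is by definition (up to a fixed permutation of tensor factors) the Choi matrix of the channel $\Phi_i^{\otimes t}$, the LHS above is the Choi matrix of the averaged channel $\tfrac{1}{|X|}\sum_i \Phi_i^{\otimes t}$. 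By \eqref{eq:choi_average}, the RHS is the Choi matrix of the Stinespring-averaged channel $\langle\operatorname{Tr}_E[U^{\otimes t}(\,\cdot\otimes\op{0}_E^{\otimes t})(U^\dagger)^{\otimes t}]\rangle_{\mathcal{U}(dk)}$. Two channels on $\mathcal{B}(\mathcal{H}_d^{\otimes t})$ coincide on every input $\rho$ if and only if their Choi matrices agree, so the equality of Choi matrices displayed above is equivalent to the defining identity of a channel $[t,k]$-design recalled at the start of Section~\ref{sec:channel}.

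Combined with the closed-form expression \eqref{eq:average_choi_explicit} for the target Choi matrix, this equivalence turns the proposition into a ready-to-use numerical criterion, in direct analogy with the role played by the Welch bound for complex projective designs. I do not anticipate any substantive obstacle: linearity and non-degeneracy of the Choi--Jamio{\l}kowski isomorphism do all the work. The one point requiring care is bookkeeping of tensor-factor orderings, so that the index pattern $(a_n,b_n,i_n,j_n)$ appearing in \eqref{eq:average_choi_explicit} genuinely matches that of $\sigma_{\Phi_i^{\otimes t}}$ in the empirical average; once this identification is pinned down, the equivalence is automatic.
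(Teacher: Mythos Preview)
Your proposal is correct and matches the paper's (implicit) reasoning: the paper states the proposition without a separate proof, treating it as an immediate consequence of the Choi--Jamio{\l}kowski reformulation \eqref{eq:choi_average} and the explicit formula \eqref{eq:average_choi_explicit}, and your argument---nonnegativity of the norm, then equality via injectivity of the Choi map---is precisely the natural unpacking of that step.
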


    For channel $[3,2]$-design on qubits, induced by the Clifford group on 2 qubits, we find 24 distinct unitary channels with multiplicity 96 each and 24 channels with Kraus rank 2 and multiplicity 384. These lead to a unit weight for each unitary and weight $4$ for the rank-2 operations. The entire channel $[3,2]$-design, consisting of 48 single-qubit channels, is visually depicted in Fig.~\ref{fig:channel_design_comps}.

    First, we note that the 24 unitary operations that form the set $\mathcal{R}_1$ correspond naturally to the elements of the Clifford group on a single qubit, $\mathcal{R}_1 = \mathcal{C}_1$.

    The remaining 24 channels of rank 2, collected in the set $\mathcal{R}_2$, are given by Choi-Jamiołkowski states with eigenvalues $\qty{1/2,1/2,0,0}$. They can be easily summarised by pairs of product states that form the support of the operators,
    {\small
    \begin{equation}
        \mathcal{R}_2 = \qty{\sigma_\Phi \!=\! \frac{1}{2}\sum_{i=0}^1\op{\psi_i}:\ket{\psi_i} = \qty(A\otimes B)\operatorname{CNOT}_{2\rightarrow1}^{k}\qty(\sigma_x^{j}\otimes\mathbb{I})\ket{0i}}  \label{eq:rd_2_channels}  
    \end{equation}
    }
    with $j,k\in\qty{0,1}$ and $A,B\in\qty{\mathbb{I},H,SH}$, after discarding duplicate channels. In particular, the subset for $k=1$ consists only of complete contraction channels towards 6 vertices of an octahedron inscribed in the Bloch sphere. The other 18 operations of Kraus rank 2 are equal combinations of pairs of Clifford unitaries.

    An example of a channel $[3,4]$-design is hard to calculate explicitly due to the order of Clifford group on 3 qubits, which is equal to $92\,897\,280$. Thus, it is necessary to resort to a sampling approach, using up to $5\cdot10^6$ sample channels sampled from a flat distribution over the Clifford group using a method demonstrated in~\cite{Berg} 
    and later verification using agreement of the average operator as given in \eqref{eq:average_choi_explicit}. This leads to the following observations.
    \begin{enumerate}
        \item The induced design is composed of the same channels as the channel $[3,4]$-design induced by Clifford group on 2 qubits, extended by maximally depolarising channel, corresponding to the maximally mixed Choi-Jamiołkowski state, \mbox{$\sigma_\Phi = \mathbb{I}_4/4$}.
        \item The weights are estimated to be $1$ for each rank-1 channel, $12$ for each rank-2 channel, and $192$ for the maximally depolarising channel. The weights are confirmed analytically using Proposition \ref{prop:analytical_criterion_channel_designs}.
        \item No rank-3 channels are necessary to construct a channel $[2,4]$-design on two qubits.
    \end{enumerate}

    The above leads to the claim that a channel $[3,k]$-design for any $k\in(1,\infty]$ can be achieved by taking the channels of the Clifford group $\mathcal{C}_1$, the rank-2 channels $\mathcal{R}_2$, and the maximally depolarising channel, taken with $k$-dependent weights.
    This claim can be verified by considering the following average,
    \begin{equation} \label{eq:2-design_decompo_chan}
        \frac{1}{a+b+c}\qty(a\sum_{\Phi\in\mathcal{C}_1}\sigma_{\Phi^{\otimes 2}}+
        b\sum_{\Phi\in\mathcal{R}_2}\sigma_{\Phi^{\otimes 2}} + c \frac{\mathbb{I}_{16}}{16}) = \ev{\sigma_{\Phi^{\otimes 2}}}_{\mathcal{U}(2k)}.
    \end{equation}
    We may assume without loss of generality $a=1$ and by solving for $b$ and $c$ we find the weight corresponding to the rank-2 channels as $b = 4(k-1)$, and the weight of the maximally depolarising channel as \mbox{$c = 32(k^2-3k+2)$}. Furthermore, we know that for $k = 2^l$ the expressions must match the Clifford group on $l+1$ qubits, and thus $a,b,c$ are also proper weights for $3$ designs in such cases. 

    Finally, Clifford group on two qubits $\mathcal{C}_2$ gives rise to a 3-design in the space of unistochastic channels acting on a single qubit, according to the definition presented in Appendix~\ref{app:unistoch}. Such a design is in fact composed of $43$ channels, divided into $24$ unitary channels from the single-qubit Clifford group $\mathcal{C}_1$ with weight $1$ each, $18$ mixed unitary channels $\mathcal{R}'_2 = \eval{\mathcal{R}_2}_{\text{mixed unitary}}$ with a weight $12$ assigned to each and, finally, the maximally depolarising channel of weight $240$.

\section{Effective environment dimension estimation} \label{sec:dimension}

The elements of the average Choi-Jamiołkowski state, given in \eqref{eq:average_choi_explicit}, are polynomial in
dimension $k$ of the environment, 
which one can extend to admit also non-integer values.
This observation allows us to propose the concept of \textbf{effective environment dimension}\footnote{To avoid confusion we stress that the name refers to the dimension of the underlying Hilbert space rather the spatio-temporal 4 dimensions.} $k^*$, defined as
    \begin{equation} \label{eq:effective_dimension}
        k^* = \underset{k\geq1}{\operatorname{arg\,min}}\norm{\frac{1}{\abs{S}}\sum_{i=1}^{\abs{S}} \sigma_{\Phi_i^{\otimes t}} - \ev{\sigma_{\Phi^{\otimes t}}}_{\mathcal{U}(dk)}},
    \end{equation}
    where $S$ is representative set of channels in a given physical system. One possibility of applying this metric is in quantum computers. Assuming first that gate implementations are low noise, so that quantum process tomography can be implemented with fidelity high enough, one may consider allowing free evolution for time $T$ and then evaluating the effective dimension of the environment for the resulting noise channels $\Theta_T$. Such a method can give insight into the effective number of levels $k^*(T)$ that actually interact with the primary system after a given time.

    One may also argue that the minimum distance itself, 
    \begin{equation} \label{eq:effective_dimension_error}
      \epsilon^* = \min_{k\geq1}\norm{\frac{1}{\abs{S}}\sum_{i=1}^{\abs{S}} \sigma_{\Phi_i^{\otimes t}} - \ev{\sigma_{\Phi^{\otimes t}}}_{\mathcal{U}(dk)}}
    \end{equation}
    is an informative measure; if this norm does not go to zero, it can signify that the interaction with the environment is not uniform and, as such, may require further investigation and adjustment of the underlying measure from which random noise actually originates. 

    Unfortunately, estimation of $\frac{1}{\abs{S}}\sum_{i=1}^{\abs{S}} \sigma_{\Phi_i^{\otimes t}}$ is problematic, as it cannot be performed using standard ancilla-assisted process tomography~\cite{Scott_2008}. Let us recall that in order to prepare the
    Choi-Jamio{\l}kowski state $\sigma_{\mathcal{E}}$ for a channel $\mathcal{E}$, one must first prepare a maximally entangled state $\ket{\psi_+} = \frac{1}{\sqrt{d}}\sum_{i=0}^{d-1} \ket{ii}$. 
    This can be implemented in a relatively straightforward manner within quantum computers, with a limited error involved. However, the next step, involving implementation of $\mathcal{E}$ on the first subsystem and identity channel $\mathcal{I}$ on the other, turns out to be problematic. If we were to make a tomography of an arbitrary channel $\mathcal{E}$, it would be doable up to errors. In the specific case where we are trying to make tomography of the very channel $\mathcal{N}_T$ of noise, such an approach is impossible, since by keeping the working qubits idle to accumulate the noise over time $T$, the ancillary systems are also accumulating similar noise.

    Taking into account the problem described above, we use a tomography scheme which does not rely on an unperturbed ancillary system, as it involves state tomography for a set of at least $d^2$ linearly independent quantum states, such that reconstruction of the blocks $\mathcal{N}_T(\op{a}{b})$ of the Choi-Jamiołkowski state can be done in post-processing~\cite{chuang1997, poyatos1997}. In our approach, we will take the complete set of MUB on two qubits as the set of input states, keep the device in question in an idle state for a fixed amount of time $T$ and then perform state tomography using measurements based on the same set of MUB.

    \begin{figure}[h]
        \centering
        \includegraphics[width=\linewidth]{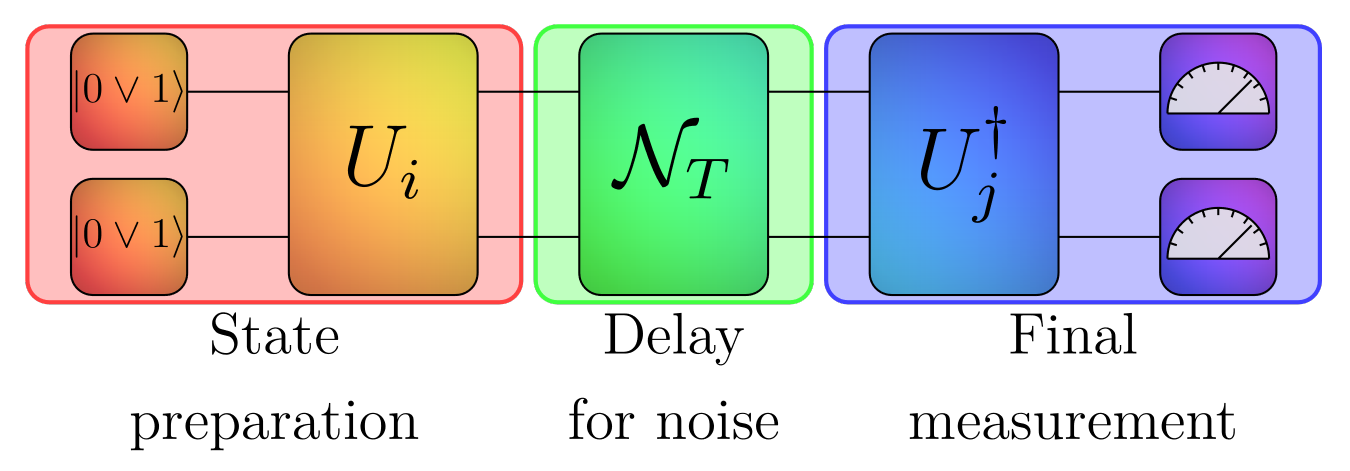}
        \caption{\textbf{Ancilla-free noise tomography scheme:} The circuit used for tomography of the noise channel $\mathcal{N}_T$ consists in three steps. First, one prepares the input state by preparing qubits in one of the computational basis states and acting on them with unitary matrix $U_i$ with $i\in\qty{0,\hdots,4}$, corresponding to one of five mutually unbiased bases (MUB). In the second step, the noise channel is implemented by forcing the quantum computer to delay the next operation by time $T$. In the third step, an inverse $U_j^\dagger$, $j\in\qty{0,\hdots,4}$ is implemented and both qubits are measured. 
        }
        \label{fig:tomograph_scheme}
    \end{figure}

    This procedure, depicted schematically in Fig.~\ref{fig:tomograph_scheme}, consists of three steps: 1) state preparation, 2) physical delay, acting as noise, and 3) final measurement. Details of the procedure are described in Appendix~\ref{app:tomo_scheme}.

    Using the above scheme, we have evaluated the effective dimension $k^*$ and the corresponding error $\epsilon^*$ for the 127-qubit quantum computer \textit{IBM Kyoto}, using 15 different times $T$ of interaction. The results of this demonstration are presented in Appendix~\ref{app:kyoto_plots}. As can be seen there, the effective dimension $k^*$ starts low for times on the order of $10$ $\mu$s, which are short relative to the energy relaxation time $t_1$ and the coherence time $t_2$, both of which are on the order of $100$ $\mu$s for \textit{IBM Kyoto}. For times two orders of magnitude larger, we observe a slow saturation of effective dimensionality around $k^* \approx 20$. Most importantly, the dimensionality obtained from the simulated noise closely matches that obtained from the actual quantum machine for both small and large times, while discrepancies appear at intermediate times. Importantly, the actual data does not exhibit the monotonic growth seen in the simulation, pointing either to unexpected effects at intermediate timescales or mismatch between the uniform interaction model and actual interactions with the environment.
    However, the error value $\epsilon^*$ shows that the uniform interaction model performs better for the actual machine than for the simulated noise, with a minimum that is not predicted by the simulation. For large times, the error $\epsilon^*$ seems to saturate at a nonzero value, $\eval{\epsilon^*}_{t\rightarrow\infty}\approx 0.3$.
    
    The non-vanishing $\epsilon^*$ and non-monotonic behaviour of $k^*$ as a function of interaction time suggests that uniform interaction with the environment may not be the most appropriate way to model the environment. One phenomenon not considered is that qubits are more prone to de-excitation, such as by emitting a photon to the environment, rather than excitation by photon absorption. To account for this, we modified the weight $4(k-1)$ from the equation~\eqref{eq:2-design_decompo_chan} assigned specifically to the complete emission channel to $4(k-1)+w$, adjusting the overall normalisation accordingly. 
    
    As shown in Appendix~\ref{app:kyoto_plots}, this noise simulation leads to the effective dimensionality of the environment around $k^* = 2$ for short times, increasing monotonically within the potential range of results as indicated by the simulated data to approximately $k^* \approx 2.2$ for long times, with agreement between simulation and demonstration. The additional weight $w$ assigned to the emission channel grows polynomially, as demonstrated by the linearity of the middle plot in the log-log scale. Finally, the $\epsilon^*$ value also exhibits an unpredicted minimum for relatively short times, and in the emission-related model, seems to converge to zero for long times, supporting the validity of the model in this regime.
    
    There are several takeaways from this analysis. First, uniform interaction of qubits in quantum computers is of limited application, with best performance for times on the order of the energy relaxation and decoherence times. Second, the emission model seems to provide a better fit, especially for longer times, with low effective dimensionality of the environment corresponding to the fact that IBM quantum computers are based on superconducting elements cooled to temperatures in the range of tens of millikelvins \cite{ibm2022goldeneye}. The final remark is that both models can be implemented relatively easily with the 49 channels contributing to Eq.~\eqref{eq:2-design_decompo_chan}, thus making it easy to simulate noise at a relatively low cost.

\section{Summary and Outlook} \label{sec:sum}

In this work we introduced the concept of pushforward designs, 
which link averaging sets
constructed for various spaces. 
In particular, 
any design consisting of pure quantum states
of size $d$ generates by decoherence
a design in the classical $d$-point
probability simplex. Another example is given by pure state
designs in a composite Hilbert
space, ${\cal H}_d \otimes {\cal H}_d$ which
lead, by partial trace, 
to designs in the set of mixed states
of size $d$. 

Focusing on averaging sets in a 
simplex induced from 
pure-state designs on complex projective space
$\mathbb{C}P^{d-1}$
we note generalized Simpson rules.
For dimension $d$ prime or power of prime
this construction follows from 
complete sets of mutually unbiased bases 
in ${\cal H}_d$. 
Analogous reasoning holds also for other dimensions
and
provides a simple method to approximate averages of functions over arbitrary hypersurfaces by applying simplex designs to triangulations.

The key result of this work consists
in an explicit construction of 
a family of $t$-designs
in the space of quantum channels. 
It allows one 
to average certain functions
over the space of quantum channels with respect to a measure induced by unitary operations on extended systems.
Taking the size $k$ of the environment
to be equal to $d^2$ one arrives
at a design corresponding
to the uniform, Lebesgue  measure in the
convex body of quantum channels.
We derive closed formulae for the $t$-copy Choi-Jamiołkowski states, thus providing a method for verification, whether a given set of channels actually provides a channel $[t,k]$-design. 

Moreover, we introduce the idea of effective environment dimensionality $k^*$, for which distance between theoretical and measured Choi-Jamiołkowski state related to noise is minimal. 
This idea was applied to estimate effective dimensionality of environment for the noise acting on qubits of an actual quantum computer, \textit{IBM Kyoto}. 
Our demonstration suggests that
the uniform distribution over quantum channels does not reflect 
the actual nature of the noise, but augmenting it by adjustable weight of emission allows us to estimate the effective dimension 
of the environment, $k^*\approx 2-2.2$,
for evolution time up to $350\mu\text{s}$.

Results introduced in this work can be extended in several 
directions. 
It is natural to 
extend the notion of exact designs
also for
 $\epsilon$-approximate $t$-designs
 and to obtain bounds on the accuracy $\epsilon$ in the image space~\cite{SCZ25}. Another 
 direction would be to 
 apply the notion 
 of pushforward designs for noncompact spaces~\cite{Markiewicz_2021}. Furthermore, it 
 is tempting 
 to refine application of
 effective environment dimensionality 
 as a benchmark of noise for quantum devices.

\begin{acknowledgments}
    It is a pleasure to thank Adam Sawicki and Marcin Markiewicz for useful comments and numerous discussions. 
    We acknowledge funding by the European Union under ERC 
    Advanced Grant \textit{TAtypic},  
    project number 101142236 and 
 by the National Science Centre, Poland,
under the contract No. 2021/03/Y/ST2/00193 within the QuantERA II Programme
that has received funding from the European Union’s Horizon 2020 research and innovation
programme under Grant Agreement No. 101017733. 
   
\end{acknowledgments}

\appendix

\section{Measures on the space of quantum channels}

As channel $[t,k]$-designs, approximating random quantum channels, play the key role in this work, for convenience of the readers, we provide here a short review of probability measures in the space of quantum channels. This appendix is based on \cite{KNPPZ21} and is included for self-containment of the manuscript.

The most general quantum operations 
        $\Phi:\Omega_d\mapsto \Omega_d$
        are given by completely positive trace-preserving (CPTP) maps. They can be defined in three equivalent manners:
        \begin{enumerate}
            \item By a set of $d_E$ Kraus operators $K_i$, with the action defined as
            \begin{equation}
                \Phi(\rho) = \sum_{i=1}^{d_E} K_i\rho K_i^\dagger
            \end{equation}
            with the identity resolution condition $\sum_i K_i^\dagger K_i = \mathbb{I}$ ensuring the trace preservation condition.
            \item By a unitary operation $U\in\mathcal{U}(d\times d_E)$ with the action of the channel defined by
            \begin{equation}
                \Phi(\rho) = \Tr_E\qty[U\qty(\rho\otimes\op{1})U^\dagger].
            \end{equation}
            \item By the corresponding Choi-Jamio{\l}kowski state $\sigma_\Phi$ defined as
            \begin{equation}
                \sigma_\Phi = \qty(\Phi\otimes\mathbb{I})\qty[\op{\psi_+}]
            \end{equation}
            where $\ket{\psi_+} = \frac{1}{\sqrt{d}} \sum_{i=1}^d \ket{ii}$ is the standard maximally entangled state.
        \end{enumerate}

    From this follow three natural ways for generating random quantum channels -- by random Kraus operators, random unitary operations, and random Choi-Jamio{\l}kowski states~\cite{KNPPZ21}. In order to understand the all the methods, we first need to shortly define a handful of ensembles of random matrices.

    \begin{defn}[Ginibre ensembles]
        A real Ginibre matrix $G_\mathbb{R}\in\mathbb{R}^{d_1\times d_2}$ that each element is taken from independent normal distribution $\mathcal{N}(1,0)$ of unit variance and zero mean. Complex Ginibre matrix $G_{\mathbb{C}}\equiv G = \qty(G'_\mathbb{R} + i G''_{\mathbb{R}})/\sqrt{2}$.
    \end{defn}
    
    \begin{defn}[Gaussian Unitary ensemble]
        Gaussian Unitary ensemble (GUE) of $d\times d$ Hermitian matrices, which is invariant under unitary operations, contains matrices of the form $H = (G + G^\dagger)/2$, where $G$ is a matrix taken from complex Ginibre ensemble.
    \end{defn} 
    
    \begin{defn}[Wishart ensemble]
        Complex Wishart ensemble with parameters $(d,s)$ consists of matrices of the form $W = GG^\dagger$, where $G$ is a rectangular $d\times s$ matrix taken from the complex Ginibre ensemble. 
    \end{defn}

    \begin{note}
        Wishart ensemble can be extended to allow for real parameter $s\in\qty{1,\hdots,d^2-1}\cup[d^2,\infty)\equiv \mathcal{S}$~
       ~\cite{PR91}.
    \end{note}

    \begin{defn}[Circular Unitary ensemble]
        Circular Unitary ensemble (CUE) consists of unitary matrices $U\in\mathcal{U}(d)$ distributed according to the Haar measure.
    \end{defn}

    Equipped with the above ingredients, we are ready to provide three recipes for generating random quantum channels, corresponding to the three  aforementioned (equivalent) definitions, taken from~\cite{BRUZDA2009320, KNPPZ21} and adapted for channels that preserve the dimensionality of the system, i.e. $\Phi:\mathcal{H}_d\mapsto\mathcal{H}_d$.
    
    \begin{const}[Random Quantum Channels via Kraus operators]\label{cons:Kraus}
        Let $s\geq1$ be an arbitrary natural number. We define $\mu^{\text{Kraus}}_{d,s}$ to be the probability measure of the randum quantum channel $\Phi$ defined in the following way:
        \begin{enumerate}
            \item Take $s$ random matrices $G_1,\hdots,G_s$ of size $d\times d$, drawn independently from the complex Ginibre ensemble.
            \item Compute the matrix $H = \sum_{i=1}^s G_i^\dagger G_i$.
            \item Define Kraus operators as $K_i = G_i H^{-1/2}$, $i\in\qty{1,\hdots,s}$.
        \end{enumerate}
    \end{const}

    \begin{const}[Random Quantum Channels via Choi-Jamiołkowski states]\label{cons:Choi}
        Let $s\in\mathcal{S}$ be a real number. We define $\mu^{\text{Choi}}_{d,s}$ to be the probability measure of the random quantum channel $\Phi$ defined in the following way:
        \begin{enumerate}
            \item Take a random complex Wishart matrix $W$ with parameters $(d^2,s)$;
            \item Find the positive semidefinite matrix defined by partial trace $H\equiv\Tr_B(W)$;
            \item Construct the sampled Choi-Jamiołkowski state as
            \begin{equation}
                \sigma_{\Phi} \equiv \qty(H^{-1/2}\otimes\mathbb{I})W\qty(H^{-1/2}\otimes\mathbb{I}).
            \end{equation}
        \end{enumerate}
    \end{const}

    \begin{const}[Random Quantum Channels via environmental form]\label{cons:env}
        Let $s$ be an integer and $s\geq1$. We define $\mu^{\text{Stinespring}}_{d,s}$ to be the probability measure of the random quantum channel $\Phi$, defined as follows:
        \begin{enumerate}
            \item Consider a random unitary matrix $U\in\mathcal{U}(Md)$, acting on the principal system and the environment $E$ of dimension $M$.
            \item The channel $\Phi$ is defined as
            \begin{equation}
                \Phi(\rho) = \Tr_E \qty[U\qty(\rho\otimes\op{0}_E)U^\dagger].
            \end{equation}
        \end{enumerate}
    \end{const}
    For Constructions \ref{cons:Choi} and \ref{cons:Kraus} one needs to take note of the zero-measure sets, in which the matrices $H$ are non-invertible and, therefore, do not adhere to the scheme.

    All the above constructions should be complemented with the natural (flat) Lebesgue measure $\mu^{\text{Lebesgue}}_d$ on the space of channels, induced by normalising the Hilbert-Schmidt volume to 1. In~\cite{KNPPZ21} the following equality of measures has been proven:
    \begin{equation}\label{eq:flat_meas}
        \mu^{\text{Lebesgue}}_d = \mu^{\text{Stinespring}}_{d,d^2} =\mu^{\text{Choi}}_{d,d^2} =\mu^{\text{Kraus}}_{d,d^2},
    \end{equation}
    with the first two equalities being crucial to our  considerations.

    \begin{widetext}
    \section{Isocoherent MUB}\label{app:isocoh_MUB}

    In this appendix we aim to find a global transformation $U$ a complete set of mutually unbiased bases $\qty{B_i}_{i=0}^4$ in dimension $d=4$ such that the classical states obtained by decoherence with respect to the computational basis is identical, ie. for $\ket{\psi}\in U B_i$  we have $\abs{\ip{\psi}{j}}^2 = p_{\sigma(j)}$ for some permutation $\sigma\in\mathcal{S}_4$ and a fixed set of probabilities $\qty{p_0,\,p_1,\,p_2,\,p_3}$. 
    Note that this notion of isocoherence stems from the resource theory of coherence, where the states are considered coherent or incoherent with respect to a given measurement basis \cite{winter2016operational}.
    Using a simple method of random walk on the unitary group $U(4)$ under cost function $\mathcal{C}$ reflecting the isocoherent property, 
    \begin{equation*}
        \mathcal{C}\qty(U) = \min_{\sum_j p_j = 1,p_j\geq 0}\qty(\sum_{u=0}^4\sum_{\ket{\psi}\in B_u} \min_{\sigma\in\mathcal{S}_4} \sum_{j=0}^3\abs{\abs{\mel{\psi}{U}{j}}^2 - p_{\sigma_j}})
    \end{equation*}
    and later simplification of the solution $\mathcal{C}(U) = 0$ by dephasing we found a particularly elegant form of isocoherent MUB given by 
    \begin{equation}
        B_j = \mqty(p_3&p_2&p_1&p_0\\p_0&p_3&p_2&p_1\\p_2&p_1&p_0&p_3\\p_1&p_0&p_3&p_2) \odot \operatorname{EXP}\qty[i \Phi_j]
    \end{equation}
    where the multiplication $\odot$ and exponentiation $\operatorname{EXP}$ are taken to be element-wise.
    The amplitudes are given by
        \begin{equation}
            \begin{aligned}
                p_0 & = \frac{1}{2} \sqrt{ 1+\frac{1}{\sqrt{5}}+\frac{1}{5}\sqrt{10+2 \sqrt{5}}},&
                p_1 & = \frac{1}{2} \sqrt{1-\frac{1}{\sqrt{5}}+\frac{1}{5}\sqrt{10-2 \sqrt{5}}},\\
                p_2 & = \frac{1}{2} \sqrt{1+\frac{1}{\sqrt{5}}-\frac{1}{5} \sqrt{10+2 \sqrt{5}}}, &
                p_3 & = \frac{1}{2} \sqrt{1-\frac{1}{\sqrt{5}}-\frac{1}{5}\sqrt{10-2 \sqrt{5}}},
            \end{aligned}    
        \end{equation}
    and phase matrices $\qty{\Phi_j}_{j=0}^4$ are given as 
        \begin{equation}
            \begin{aligned}
                \Phi_0 & = \mqty(
         0 & 0 & 0 & 0 \\
         0 & 0 & 0 & \pi  \\
         0 & 0 & \pi  & 0 \\
         0 & \pi  & \pi  & 0 \\), & 
                \Phi_1 & = \mqty(
         0 & 0 & 0 & 0 \\
         -\theta_{--} & -\theta_{--} & -\theta_{--} & \theta_{++} \\
         \theta_{-+} & \theta_{-+} & -\theta_{+-} & \theta_{-+} \\
         -\theta_{-+} & \theta_{+-} & \theta_{+-} & -\theta_{-+} \\
                ), &
                \Phi_2 & = \mqty(
         0 & 0 & 0 & 0 \\
         \theta_{--} & \theta_{--} & \theta_{--} & -\theta_{++} \\
         -\theta_{-+} & -\theta_{-+} & \theta_{+-} & -\theta_{-+} \\
         \theta_{-+} & -\theta_{+-} & -\theta_{+-} & \theta_{-+} \\
                ), \\&&
                \Phi_3 & = \mqty(
         0 & 0 & 0 & 0 \\
         -\theta_{-+} & -\theta_{-+} & -\theta_{-+} & \theta_{+-} \\
         -\theta_{--} & -\theta_{--} & \theta_{++} & -\theta_{--} \\
         \theta_{--} & -\theta_{++} & -\theta_{++} & \theta_{--} \\
                ), & 
                \Phi_4 & = \mqty(
         0 & 0 & 0 & 0 \\
         \theta_{-+} & \theta_{-+} & \theta_{-+} & -\theta_{+-} \\
         \theta_{--} & \theta_{--} & -\theta_{++} & \theta_{--} \\
         -\theta_{--} & \theta_{++} & \theta_{++} & -\theta_{--} \\
                ),
            \end{aligned}
        \end{equation}
    with $\theta_{\pm\pm} = \arccos\left[\frac{1}{4} \left(\pm1\pm\sqrt{5}\right)\right]$. This provides an example of a set of isocoherent MUB which might not be generated by powers of a single unitary operation. Since the amplitudes for all states are identical up to permutation, the corresponding linear entropy, a polynomial of order 2, for each state is equal to $\sum p_i^2 = 2/5$, which is fixed by the 2-design property \cite{Czartowski_2018}. More intriguingly, purity of the reduced states attains only one of two values, $\Tr(\Tr_B(\rho_i)^2) = 4/5 \pm 1/5\sqrt{5}$, with half of the states corresponding to each. We note that this property is similar to the one obtained by Zhu, Teo and Englert for two-qubit SIC-POVM \cite{Zhu2010twoqubit}.
    \end{widetext}

    \section{Unistochastic designs} \label{app:unistoch}

    In this appendix we will present an extension of the concept of channel designs, given in the main body of the manuscript, to unistochastic channels, defined by the action of a unitary operation $U\in\mathcal{U}(d^2)$ on a $d$-dimensional state $\rho$ extended by a maximally mixed state,
    \begin{equation}
        \Xi_U(\rho) = \Tr_E\qty[U\qty(\rho\otimes\frac{\mathbb{I}_E}{d})U^\dagger].
    \end{equation}
    The name is derived from the fact that both unistochastic matrices and unistochastic channels are defined uniquely by the underlying unitary operation \cite{bengtsson2004onduality}.
    
    With this definition one can see that a unitary $t$-design $X$ on dimension $d^2$ induces a unistochastic $t$-design,
    \begin{equation}
        \frac{1}{\abs{X}}\sum_{U_i\in X} \Xi_{U_i}^{\otimes t}\qty(\rho) = \int_{\mathcal{U}(d^2)} \Tr_E\qty[U^{\otimes t}\qty(\rho\otimes\frac{\mathbb{I}_E}{d^t})U^{\dagger\otimes t}] \dd{U}.
    \end{equation}
    where the integration is performed over the measure induced by the Haar measure of $U(d^2)$. This measure is not equivalent to the Lebesgue measure in the space of unistochastic channels \cite{musz2013unitary}, but it is straightforward  to  generate random operations with respect to it.

    We may carry out similar calculations in this case as for the channel designs,

    \begin{widetext}
    \begin{align}
        \tensor*{\ev{\sigma_{\Xi^{\otimes 2}}}}{*^{a_1}_{b_1}^{a_2}_{b_2}^{i_1}_{j_1}^{i_2}_{j_2}} 
            = & 
            \frac{1}{d^4}\ev{\sum_{l_1,l_2,o_1,o_2=0}^{d-1}
            \tensor*{U}{*^{l_1}_{o_1}^{i_1}_{n_1}}
            \tensor*{U}{*^{l_2}_{o_2}^{i_2}_{n_2}}
            \delta^{n_1}_{a_1}
            \delta^{n_2}_{a_2}
            \delta^{b_1}_{m_1}
            \delta^{b_2}_{m_2}
            \tensor*{\qty(U^\dagger)}{*_{l_1}^{o_1}_{j_1}^{m_1}}
            \tensor*{\qty(U^\dagger)}{*_{l_2}^{o_2}_{j_2}^{m_2}}} \\
            = & \frac{1}{d^4}\sum_{l_1,l_2,o_1,o_2=0}^{d-1}\ev{
            \tensor*{U}{*^{l_1}_{o_1}^{i_1}_{a_1}}
            \tensor*{U}{*^{l_2}_{o_2}^{i_2}_{a_2}}
            \tensor*{\overline{U}}{*^{l_1}_{o_1}^{j_1}_{b_1}}
            \tensor*{\overline{U}}{*^{l_2}_{o_2}^{j_2}_{b_2}}} \\
            \overset{(a)}{=} & \frac{1}{d^4} \sum_{l_1,l_2,o_1,o_2=0}^{d-1}\left[ \frac{1}{D^2-1} \qty(
            \tensor*{\delta}{*^{a_1}_{b_1}^{a_2}_{b_2}^{i_1}_{j_1}^{i_2}_{j_2}} + 
            \tensor*{\delta}{*^{a_1}_{b_2}^{a_2}_{b_1}^{i_1}_{j_2}^{i_2}_{j_1}^{l_1}_{l_2}^{o_1}_{o_2}})
            - \frac{1}{D(D^2-1)}\qty(
            \tensor*{\delta}{*^{a_1}_{b_2}^{a_2}_{b_1}^{i_1}_{j_1}^{i_2}_{j_2}} + 
            \tensor*{\delta}{*^{a_1}_{b_1}^{a_2}_{b_2}^{i_1}_{j_2}^{i_2}_{j_1}^{l_1}_{l_2}^{o_1}_{o_2}})\right] \label{eq:first_D_leftover}\\
            = & \frac{1}{d^4(D^2-1)}\qty(
            d^4 \tensor*{\delta}{*^{a_1}_{b_1}^{a_2}_{b_2}^{i_1}_{j_1}^{i_2}_{j_2}} +
            d^2   \tensor*{\delta}{*^{a_1}_{b_2}^{a_2}_{b_1}^{i_1}_{j_2}^{i_2}_{j_1}} -
            \frac{d^4}{D} \tensor*{\delta}{*^{a_1}_{b_1}^{a_2}_{b_2}^{i_1}_{j_1}^{i_2}_{j_2}} - 
            \frac{d^2}{D}   \tensor*{\delta}{*^{a_1}_{b_2}^{a_2}_{b_1}^{i_1}_{j_2}^{i_2}_{j_1}}) \label{eq:second_D_leftover}\\
            = & \frac{1}{d^4-1} \qty(\tensor*{\delta}{*^{a_1}_{b_1}^{a_2}_{b_2}^{i_1}_{j_1}^{i_2}_{j_2}} 
            + \frac{1}{d^2} \qty(\tensor*{\delta}{*^{a_1}_{b_2}^{a_2}_{b_1}^{i_1}_{j_2}^{i_2}_{j_1}}
            - \tensor*{\delta}{*^{a_1}_{b_1}^{a_2}_{b_2}^{i_1}_{j_1}^{i_2}_{j_2}})
            - \frac{1}{d^4}\tensor*{\delta}{*^{a_1}_{b_2}^{a_2}_{b_1}^{i_1}_{j_2}^{i_2}_{j_1}})
    \end{align}
    \end{widetext}
    where we kept $D = d^2$ in \eqref{eq:first_D_leftover} and \eqref{eq:second_D_leftover} to facilitate comparison between the expressions obtained here and for channel $[t,k]$-designs.

    Note that the resulting average is different from the formula for channel designs, and generalizing it for arbitrary $t$ yields
    \begin{equation} \label{eq:average_choi_explicit_unistoch}
        \tensor*{\ev{\sigma_{\Xi^{\otimes t}}}}{*^{a_1}_{b_1}^{\hdots}_{\hdots}^{a_t}_{b_t}^{i_1}_{j_1}^{\hdots}_{\hdots}^{i_t}_{j_t}} =  
        \sum_{\sigma,\tau\in\mathcal{S}_t}
        \operatorname{Wg}\qty(\sigma\tau^{-1},d^2) d^{2\operatorname{Cl}(\tau)}\qty(\prod_{n=1}^t
        \delta^{a_n}_{b_{\sigma(n)}}
        \delta^{i_n}_{j_{\tau(n)}}).
    \end{equation}
    This expression can be used to verify whether any given set of operations yields a proper design on unistochastic channels induced by unitary operations of dimension $d^2$.

    \section{Tomographic scheme} \label{app:tomo_scheme}

    In this appendix, we will describe in detail the tomographic scheme implemented for noise estimation noise in the \textit{IBM Kyoto}, as depicted schematically in Fig.~\ref{fig:tomograph_scheme}.
    
    In the first step we select a state from the computational basis and act on it with a unitary $U_i$ preparing the $i$-th MUB, and thus we have the state
    \begin{equation}
        \ket{\psi_1} = U_i \ket{\psi_0},\quad\ket{\psi_0}\in\qty{\ket{00},\,\ket{01},\,\ket{10},\,\ket{11}},
    \end{equation}
    and the matrix $U_i$ coming explicitly from the set
    \begin{align*}
        U_0 &= \mathbb{I}\\
        U_1 &= H^{\otimes 2} &
        U_2 &= S^{\otimes 2} U_1 \\
        U_3 &= \text{CNOT}_{1\rightarrow2}\text{CNOT}_{2\rightarrow1}U_2 &
        U_4 &= \text{CNOT}_{2\rightarrow1}\text{CNOT}_{1\rightarrow2}U_2
    \end{align*}
    with 
    \begin{align*}
        H & = \frac{1}{\sqrt{2}}\mqty(1&1\\1&-1),&
        \text{CNOT}_{1\rightarrow2} & = \mqty(1&0&0&0\\0&1&0&0\\0&0&0&1\\0&0&1&0),\\
        S & = \mqty(1 & 0 \\ 0 & i),&
        \text{CNOT}_{2\rightarrow1} & = \mqty(1&0&0&0\\0&0&0&1\\0&0&1&0\\0&1&0&0).
    \end{align*}

    After the second step, the state becomes, in principle, mixed,
    \begin{equation}
        \rho_2 = \mathcal{N}_T\qty(\op{\psi_1}).
    \end{equation}

    The final step uses the inverses $U_j^\dagger$ of the MUB preparations, so the measured state is
    \begin{equation}
        \rho_3 = U_j^\dagger \rho_2 U_j.
    \end{equation}

    The data obtained from this procedure are then fed back to the formula \eqref{eq:proj_des_reco_form} to reconstruct the states $\rho_2$ for each preparation, producing a total of $20\cdot 5 = 100$ circuits per time $T$.

    Finally, we may consider the full set of states $\rho_2$, which for no noise ($T=0$)should be given by all the MUB states, $\qty{\op{e^j_i}}_{i,j=1}^{d,d+1}$ with $d=4$ for the case of two qubits. We may consider a matrix $\vb{A}$ defined row-wise by vectorizations of the MUB states. We may use its Moore-Penrose pseudoinverse $\vb{A}^+$ to obtain
    \begin{equation}
         \eval{\qty(\vb{A}^+ \vb{A})}_{T=0} = \mathbb{I},
    \end{equation}
    which, under the same interpretation, consists of row-wise vectorizations of all projectors $\op{a}{b}$ with $a,b\in\qty{1,\hdots,d}$. 
    Naturally, the states $\rho_2$ are affected by noise acting in a non-zero time $T>0$. Thus, evaluating the same expression for the reconstructed matrix $\vb{A},$
    \begin{equation}
        \qty(\eval{\vb{A}^+}_{T=0})\qty(\eval{\vb{A}}_{T>0})
    \end{equation}
    gives row-wise vectorizations of the images of the projectors under noise, $\mathcal{N}_T(\op{a}{b})$, from which one may reconstruct the Choi-Jamiołkowski state $$\sigma_{\mathcal{N}_T} = \sum_{a,b=1}^4 \mathcal{N}_T(\op{a}{b})\otimes \op{a}{b}.$$
\begin{widetext}
\section{Effective dimension fits for \textit{IBM Kyoto} quantum computer} \label{app:kyoto_plots}

 Using the described tomography scheme described in Appendix \ref{app:tomo_scheme}, we have evaluated the effective dimension $k^*$ and the corresponding $\epsilon^*$ for the qubits 9 and 10 of the 127-qubit quantum computer \textit{IBM Kyoto} (Eagle architecture, see Fig.~\ref{fig:kyoto_architecture}), using 15 points 
    spread evenly from 10 $\mu$s to approximately 500 $\mu$ with respect to the logarithmic scale. Each point corresponds to 100 circuits with 1000 shots each, thus resulting in a total of $10^5$ measurements per delay time. Furthermore, we have used IBM quantum simulators with the noise data from \textit{IBM Kyoto} at the time of actual measurements to estimate the potential region within which the real-world data should fall. The results have been summarised in Figs.~\ref{fig:dimension_pure_model} and \ref{fig:dimension_emission_model}.

    \begin{figure}[h]
        \centering
        \includegraphics[width=0.5
        \linewidth]{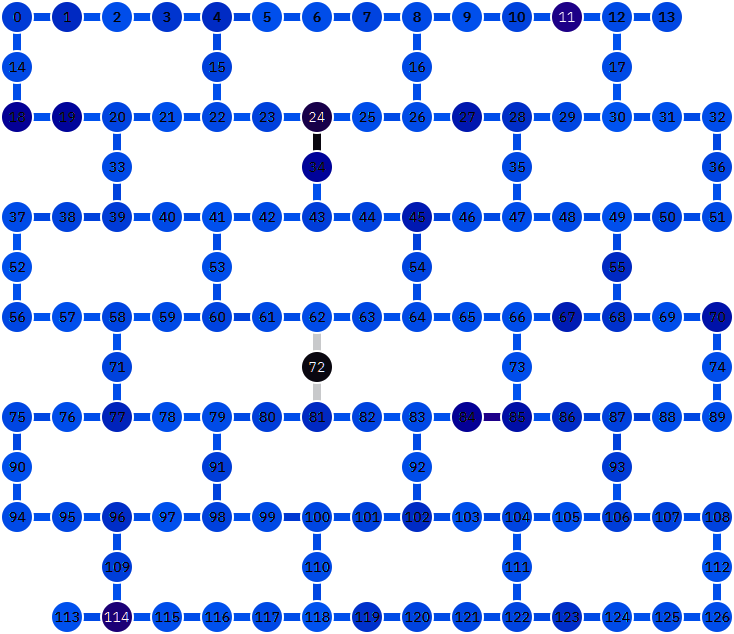}
        \caption{Schematic depiction of the architecture of 127-qubit \textit{IBM Kyoto} quantum computer. Callibration data for the computer at the time of measurements made for the demonstration are available at \cite{callibration_data}}
        \label{fig:kyoto_architecture}
    \end{figure}
    \begin{figure}[H]
        \centering
        \includegraphics[width=.9\linewidth]{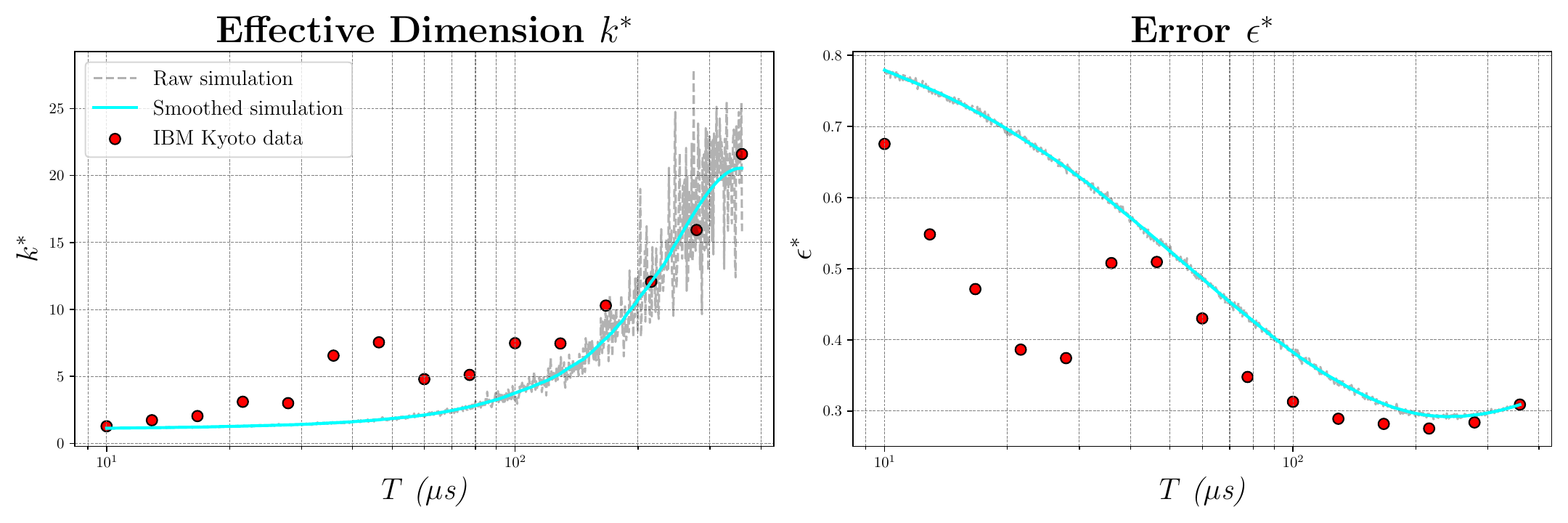}
        \caption{
        \textbf{Effective environment dimensionality for \textit{IBM Kyoto} quantum computer:}
            The effective dimension $k^*$ and corresponding error $\epsilon^*$ for qubits 9 and 10 were estimated over varying noise times $T$. Simulations based on the computer noise model (translucent black for raw data, cyan for smoothed) and data from the physical machine (red points) were utilised.
            In the left plot, $k^*$ increases from approximately 1 for short times to approximately 20 for longer times, indicating the effective dimensionality over time.
            The right plot illustrates $\epsilon^*$, showing a better fit for the actual machine data compared with the simulations, especially noticeable at shorter times. Error saturation is observed for longer timescales. 
        }
        \label{fig:dimension_pure_model}
    \end{figure}

    \begin{figure}[H]
        \centering
        \includegraphics[width=.9\linewidth]{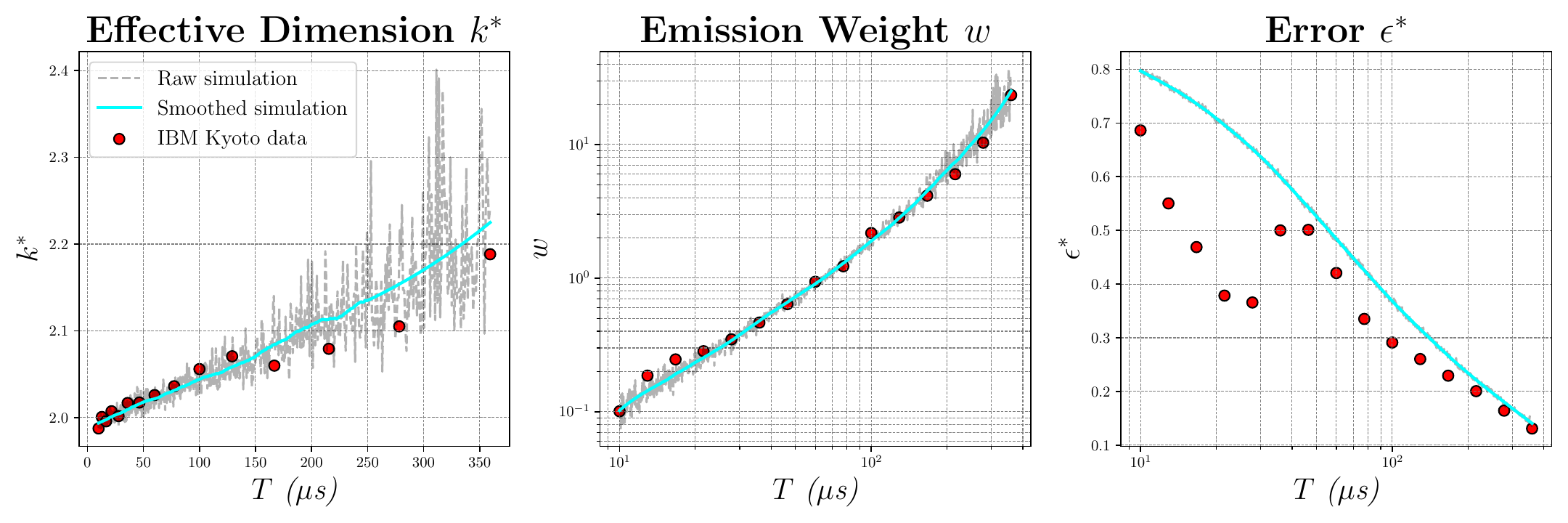}
        \caption{
        \textbf{Effective environment dimensionality with enhanced emission for \textit{IBM Kyoto} quantum computer:}
            The data from Fig.~\ref{fig:dimension_pure_model} is utilized to estimate the dimensionality $k^*$ of the environment under enhanced emission conditions, characterized by an additional weight $w$.
            In the middle plot, the added weight $w$ is depicted, while the right plot shows the estimated dimension $k^*$, which linearly transitions from 2 to 2.2 over time $T$.
            The right plot demonstrates that the model with enhanced emission provides a better fit to actual machine data compared with simulations, particularly noticeable as time $T$ increases. The error appears to vanish for longer timescales.
        }
        \label{fig:dimension_emission_model}
    \end{figure}
\end{widetext}

\vfill

\bibliography{references}

\end{document}